\documentclass[format=acmsmall, review=false]{acmart}
\usepackage{acm-ec-25}
\usepackage{booktabs} % For formal tables
\usepackage{tcolorbox}
\usepackage{hyperref}

\usepackage{latexsym,amsmath,amsthm,graphicx,enumerate}
\usepackage[ruled]{algorithm2e} % For algorithms

\SetAlFnt{\small}
\SetAlCapFnt{\small}
\SetAlCapNameFnt{\small}
\SetAlCapHSkip{0pt}
\IncMargin{-\parindent}

\newcommand{\shm}{\textsc{SHM}}
\newcommand{\cacq}{\textsc{CA-CQ}}
\newcommand{\smf}{\textsc{SMF}}

\newcommand{\HH}{\mathcal{H}}
\newcommand{\EE}{\mathcal{E}}
\newcommand{\E}{\mathcal{A}}
\renewcommand{\SS}{\mathcal{S}}
\newcommand{\CC}{\mathcal{C}}

\newtcolorbox{searchproblembox}[1]{
  colframe=black, colback=white, coltitle=white,
  colbacktitle=black, fonttitle=\bfseries,
  boxrule=1pt, arc=3mm, width=0.9\textwidth,
  left=5pt, right=5pt, top=5pt, bottom=5pt,
  title={#1}
}
\newtheorem{claim}{Claim}

% Choose a citation style by commenting/uncommenting the appropriate line:
%\setcitestyle{acmnumeric}
\setcitestyle{authoryear}
\newcommand{\PP}{\mathcal{P}}

% Title. Note the optional short title for running heads. In the interest of anonymization, please do not include any acknowledgements.
\title[Near-Feasible Solutions to Complex Stable Matching Problems]{Near-Feasible Solutions to Complex Stable Matching Problems}

% Anonymized submission.
\author{Gergely Csáji (HUN-REN KRTK)}

% Abstract. Note that this must come before \maketitle.
\begin{abstract}
 In this paper, we demonstrate that in many NP-complete variants of the stable matching problem, such as the Stable Hypergraph Matching problem, the College Admission problem with Common Quotas and the Stable Multicommodity Flow problem, a near-feasible stable solution -- that is, a solution which is stable, but may slightly violate some capacities -- always exists. Our results provide strong theoretical guarantees that even under complex constraints, stability can be restored with minimal capacity modifications. 

To achieve this, we present an iterative rounding algorithm that starts from a stable fractional solution and systematically adjusts capacities to ensure the existence of an integral stable solution. This approach leverages Scarf's algorithm to compute an initial fractional stable solution, which serves as the foundation for our rounding process. Notably, in the case of the Stable Fixtures problem, where a stable fractional matching can be computed efficiently, our method runs in polynomial time. 

These findings have significant practical implications for market design, college admissions, and other real-world allocation problems, where small adjustments to institutional constraints can guarantee stable and implementable outcomes.
\end{abstract}

\begin{document}

% Title page for title and abstract only.
\begin{titlepage}

\maketitle

% Optionally include a table of contents
\vspace{1cm}
\setcounter{tocdepth}{2} % adjust to 1 if desired
\tableofcontents

\end{titlepage}

% Paper body
\section{Introduction and Related Work}
Stable matching problems are fundamental in economics and operations research, underpinning key applications such as student admissions, job markets, kidney exchanges, and housing allocations. These models provide a structured way to assign agents to institutions while ensuring fairness and stability.

The Stable Matching problem was introduced by \citet{gale1962college}, along with their celebrated polynomial-time algorithm to find a stable matching.  In their model, students express preferences over colleges, while colleges rank applicants and specify admission quotas, all of which are coordinated centrally. The key solution concept they introduced is \emph{stability}, which ensures that a college only rejects an applicant if it has already filled its available spots with more preferred candidates. They proposed the deferred-acceptance (DA) algorithm, which guarantees the existence of a stable matching and yields an outcome most favorable to students when they are the proposing side. %Subsequent research demonstrated that the algorithm can be executed in linear time with respect to the number of applications (see, e.g., \cite{knuth1997mariages}) and that the student-proposing DA mechanism is strategy-proof for students (see, e.g., \cite{roth-sotomayor1990}).

Gale and Shapley’s work has sparked extensive research across multiple disciplines, including mathematics, computer science, operations research, economics, and game theory. This field, commonly referred to as Matching Under Preferences or Matching Markets, has since grown into a major area of study. For a comprehensive overview, we refer readers to the book by \citet{manlove2013algorithmics}.
Stable matchings are also utilized in various real-world applications, including the famous U.S. National Resident Matching Program (NRMP) \cite{roth1984}. Furthermore, they are used in various national college admissions and school choice programs worldwide. For an overview of applications in the U.S., we refer readers to the survey by \citet{roth2008}.

Since the introduction of stable matchings, extensive generalizations have been studied, encompassing more complex settings where additional constraints, such as quotas or preferences over multiple partners, come into play. Some variants remain tractable, such as the Stable Roommates problem \cite{irving1985efficient}, which extends stable matchings to general graphs and the Stable Fixtures problem \cite{Irving2007fixtures}, which also introduces capacities. However, many extensions lead to computational intractability, including Stable Hypergraph Matching, College Admission (or Many-to-one Stable Matching) with Common Quotas, and the Stable Multicommodity Flow problem, all of which are NP-hard. Also, for such complex variants, stable matchings may fail to exist, making them less viable in practice. The main purpose of this paper is to show that even in these NP-hard variants, the existence of stable matchings can be guaranteed, if the capacities are allowed to be slightly violated.

As a particularly relevant case, we also focus on a problem coming from college admissions. Here, many additional constraints and special features necessitate modifications to the standard DA algorithm. One key challenge is the presence of common quotas, which arise for example due to faculty-imposed limits on program enrollments and national quotas for state-financed students in specific subjects, see a description of the Hungarian case in \cite{agoston2022college}. This feature makes the underlying matching problem NP-hard \cite{biro2010college} (except for the case, when the common quotas form a laminar system), though the resulting computational challenges can be potentially tackled by Integer Programming methods, as demonstrated by \citet{agoston2022college}. As one of our main results, we show that for the College Admission with Common Quotas problem, if each college belongs to only a few sets with a common quota, then a minimal amount of modifications in the quotas can guarantee the existence of a stable matching.

%Independently of matching problems under preferences, \citet{scarf1967core} demonstrated that balanced NTU-games always have a non-empty core by presenting a finite algorithm to compute such a core solution for this class of cooperative games with non-transferable utility. His algorithm produces a fractional solution that verifies his key lemma, and this solution is guaranteed to be integral when applied to balanced NTU-games. While we do not delve into these game-theoretic concepts in detail, we rely on the Scarf lemma in the context of (fractional) stable matchings in hypergraphs.

The study of stable matchings in hypergraphs originated with  \citet{AharoniFleiner03}, who formulated it as the problem of identifying core elements in simple NTU-games. This is closely related to core stability in hedonic coalition formation games (see, e.g., \cite{banerjee2001core, woeginger2013core, aziz2012existence}). In this framework, vertices represent agents, while hyperedges correspond to potential coalitions. Each agent can participate in at most one (or more generally, a fixed number of) coalition(s) in the final solution, meaning that the goal is to find a valid matching in the hypergraph. Since agents' preferences depend only on their own coalition, the problem is referred to as hedonic. Preferences are typically represented by a linear ordering of the incident hyperedges for each vertex. A coalition structure belongs to the core (or equivalently, the matching is stable) if no hyperedge exists that is strictly preferred by all its members over their current assignments.

A special case of stable hypergraph matching, the 3-dimensional variant of stable matching -- that is also related to the problem of assigning residents to hospitals in the presence of couples, as is the case currently in the NRMP -- was proposed by %Knuth
\citet{knuth1997mariages} and later investigated by %Ng and Hirschberg
\citet{doi:10.1137/0404023} and %Huang
\citet{huang2007two}. They proved that the problem is NP-complete even in restricted settings.

Subsequent research has explored more complex matching problems to determine whether they can be framed as special cases of stable matchings in hypergraphs, where agents may also have capacity constraints. A general framework for this approach was developed by \citet{biro2016fractional}, who applied Scarf’s lemma \cite{scarf1967core}, a fundamental theorem for NTU games, to prove the existence of fractional stable solutions in (capacitated) hypergraph matching problems. Moreover, as Scarf's lemma was proved in an algorithmic way -- later referred to as Scarf's algorithm -- this implies that stable fractional matchings can be found algorithmically.

Sadly, the computation of a fractional solution by Scarf'S algorithm is not polynomial time in general. The PPAD-hardness of computing a fractional stable matching was proved by \citet{kintali2013reducibility} and were linked with several other PPAD-hard combinatorial problems.

A different type of generalization to network flows, called the Stable Flow problem, introduced by \citet{inproceedings}, captures the challenge of finding a flow that satisfies stability conditions within a network, and the author provided a polynomial-time algorithm for computing such a stable solution. This concept was later extended by \citet{kiraly}, who explored the more complex case of multiple commodities. In this framework, they introduced orderings on the network arcs to rank the commodities that can be sent along each arc. They showed that a stable fractional multicommodity flow always exists; however, finding such a solution is PPAD-hard, highlighting the computational challenges involved. The problem of determining whether a stable multicommodity flow exists was proven NP-complete by \citet{cseh}, and further work by \citet{csaji2022complexity} demonstrated that even with just two commodities, the problem remains NP-complete. However, the work of \citet{csaji2022complexity} also proved that Scarf's algorithm can be used to find a stable fractional solution.

It has been observed previously that rounding the fractional solution obtained from Scarf's algorithm can lead to near-feasible stable solutions, meaning a stable matching can be achieved if capacities can be slightly adjusted. This approach has been explored in various settings. For instance, \citet{nguyen2018near} studied the Hospital-Resident matching problem with couples—an NP-hard extension of the many-to-one stable matching problem—where some agents have joint preferences over pairs of positions. They demonstrated that modifying hospital capacities by at most two ensures the existence of a stable outcome via an iterative rounding algorithm. In a restricted case, where couples can only apply together for the same positions, \citet{dean} showed that a modification of one in the capacities is already enough, and furthermore a near-feasible stable matching can be found in polynomial-time. Most recently, \citet{csaji2023couples} proved that when the couples' preferences satisfy some additional restrictions, but still allowing them to apply for different positions, a near-feasible stable matching can be found in polynomial time and a change of one in the capacities is enough for that.
Apart from resident allocation with couples, \citet{nguyen2019stable} showed that allowing small capacity perturbations can help find stable matchings that satisfy proportionality constraints. \citet{nguyen2021stability} used a similar approach to find group-stable matchings. In a related context, \citet{chencsaji} examined capacity adjustments in school choice, aiming to achieve matchings that are both stable and either Pareto-efficient or perfect (where every student is assigned a seat). More generally, capacity modifications have been proposed as a practical tool to overcome non-existence and inherent computational challenges in complex matching markets, while maintaining desirable properties.

 %As follow-up results, by reduction from the Scarf Lemma, Kir\'aly and Pap also showed the PPAD-hardness of Sperner's lemma \cite{kiraly2013ppad}, and the stable multicommodity flows problem \cite{kiraly-pap2013}. Recently Cs\'aji extended these PPAD-hardness results for the fractional stable matching problem for three-partite hypergraphs, and for the matching with couples problem \cite{csaji2022complexity}. 

In this paper, we build upon and extend these results, demonstrating that small capacity modifications can ensure the existence of stable solutions for a new range of problems. Our results highlight the practical significance of \emph{near-feasible stable matchings}—offering guarantees that even under complex quota constraints, stability can be restored with only minimal adjustments. This insight has profound implications for policy design in college admissions, workforce allocation, and market design, where slight modifications to institutional constraints can lead to stable, implementable outcomes.

\subsection{Contributions}

Our contributions establish existential and algorithmic guarantees for near-feasible stable solutions in a range of complex stable matching problems, extending the techniques introduced by \cite{nguyen2018near,nguyen2019stable}.
\begin{itemize}
\item
\emph{Stable Hypergraph Matching (with Vertex Capacities):} We show that if the largest hyperedge in the hypergraph has size $\ell$, then modifying each capacity by at most $\ell - 1$ ensures the existence of a stable matching. This result holds even when ties are present. Furthermore, the changes in the total sum of the capacities remain bounded by $\ell - 1$. For the Stable Fixtures problem, where $\ell = 2$, we provide a polynomial-time algorithm to compute such a modified instance and a corresponding near-feasible stable matching. 

\item 
\emph{College Admission with Common Quotas:} We prove that when each college belongs to at most $\ell$ subsets with common quotas, modifying each individual and common quota by at most $2\ell - 1$ ensures the existence of a stable matching, even in the presence of ties. For example in Hungary, common quotas arise from the faculty quotas imposed on their programs and from the national quotas set for state-financed students in each subject (see \cite{agoston2022college}), which implies $\ell = 2$, and hence the maximum modification required is at most 3, which makes our result very practical and desirable in complex College Admission mechanisms, where stability is essential. %Our result extends to cases where preference lists include ties, making it broadly applicable to real-world admission systems that impose overlapping constraints on student intake.

\item \emph{Stable Multicommodity Flow:} We extend our analysis to network flow settings, showing that when each commodity-specific capacity remains fixed, adjusting aggregate capacities by at most $k - 1$ (where $k$ is the number of commodities) guarantees a stable solution. Furthermore, the size of the flows can only change by strictly less than 1 and even the change in the size of the aggregate flow can be bounded by 1, if we allow a change of 2 in the individual flows. This result provides insights into designing robust, nearly stable network flow mechanisms in practical applications such as traffic and logistics systems.

\end{itemize}

As in most practical cases, the parameters $\ell$ and $k$ are usually quite small, these results reinforce the practical viability of stable matching models in constrained settings, demonstrating that small, targeted modifications to system parameters can ensure desirable outcomes. Our findings are particularly relevant for economists and policymakers designing market mechanisms, offering constructive methods to restore stability without major systemic changes.

\section{Preliminaries}

In this section, we formally define our models and introduce the necessary notations.

We use $\mathbb{Z}_+$ and $\mathbb{R}_+$ to denote the sets of nonnegative integers and nonnegative real numbers, respectively. For an integer $ k \in \mathbb{Z}_+ $, we define $[k]$ as the set $\{ 1,2,\dots, k\}$.

\medskip
\textbf{Hypergraphs and Matchings.}
A \emph{hypergraph} $\mathcal{H} = (V, \mathcal{E})$ consists of a set of vertices $V$ and a set of (hyper)edges $\mathcal{E} \subseteq 2^V \setminus \{ \emptyset \}$, where $2^V$ denotes the power set of $V$. A \emph{graph} is a special case of a hypergraph in which each hyperedge contains exactly two vertices. For the sake of simplicity, we will refer to hyperedges simply as edges too.

In economic markets, such as school-choice, resident-allocation, or team-formation, the set of vertices corresponds to the set of agents, while the set of edges describes the acceptability relations or possible contracts.

Given a set of capacities $q:V\to \mathbb{Z}_+$, a function $M: \mathcal{E} \to [0,1]$ is called a \emph{fractional matching} if it satisfies
\[
\sum_{e \ni v} M(e) \leq q(v) \quad \text{for all } v \in V.
\]
Similarly, a \emph{matching} is a function $M: \mathcal{E} \to \{0,1\}$, satisfying the same capacity constaints. 

One can also think of a matching, as a subset of edges such that no vertex $v \in V$ appears in more than $q(v)$ edges of $M$.
A vertex $v$ is called \emph{unsaturated} in $M$ if the inequality is strict, i.e.,
$\sum\limits_{e \ni v} M(e) < q(v),$
and \emph{saturated} if equality holds.
Given a (fractional) matching $M$, we let $M(v): =\{e\in \EE\mid M(e)>0\}$ and $|M(v)|: = \sum\limits_{e\ni v}M(e)$.

The \emph{characteristic vector} of a (fractional) matching $M$ is the vector $x\in \mathbb{R}^{\EE}$, where $x[e]=M(e)$.

\medskip
\textbf{Directed Graphs and Flows.}
A \emph{directed graph} $D = (V, \E)$ consists of a set of vertices $V$ and a set of directed arcs $\E \subseteq V \times V$. Each arc $uv \in \E$ starts at vertex $u$ and ends at vertex $v$. For a vertex $v$, we denote the set of incoming arcs by $\rho(v)$ and the set of outgoing arcs by $\delta(v)$.

In a directed graph $D = (V, \E)$ with two distinguished vertices $s, t \in V$ and a capacity function $c: \E \to \mathbb{Z}_+$, a function $f: \E \to \mathbb{R}_+$ is called a \emph{flow} if it satisfies:
\begin{enumerate}
    \item \emph{Kirchoff law:} For every $ v \in V \setminus \{s, t\} $, 
    \[
    \sum_{a \in \delta(v)} f(a) = \sum_{a \in \rho(v)} f(a).
    \]
    \item \emph{Capacity constraint:} For every $a \in \E$,
    \[
    f(a) \leq c(a).
    \]
\end{enumerate}

The \emph{size of a flow $f$} is given by $$|f^j|:=\sum\limits_{a\in \delta (s)}f^j(a).$$

We refer to the tuple $(D, s, t, c)$ as a \emph{directed network}.

In the \emph{multicommodity flow} framework, we consider multiple commodities, indexed by $ j \in [k] $. Each commodity has its own source-terminal pair $(s^j, t^j)$ and a commodity-specific capacity function $ c^j: \E \to \mathbb{Z}_+ $. A collection of functions $ f = (f^1, \dots, f^k) $ is called a \emph{multicommodity flow} if:
\begin{enumerate}
    \item Each $ f^j $ is a valid flow in the network $ (D, c^j, s^j, t^j) $, and  
    \item The total flow on any arc does not exceed its overall capacity, i.e.,  
    \[
   f(a):= \sum_{j \in [k]} f^j(a) \leq c(a) \quad \text{for all } a \in \E.
    \]
\end{enumerate}

For a multicommodity flow $f=(f^1,\dots, f^k)$, the \emph{size of the aggregate flow $f$} is given by $$|f|:=\sum\limits_{j\in [k]}|f^j|.$$

Multicommodity flows model scenarios where multiple independent flows compete for limited resources, such as traffic routing in communication networks, logistics and supply chain distribution, or scheduling in transportation systems. 

\medskip
\textbf{Polyhedra.}
A 	\emph{polyhedron} is a set of points in $\mathbb{R}^n$ defined by a finite number of linear inequalities of the form $Qx \leq d$, where $Q$ is an $m \times n$ matrix and $d$ is an $m$-dimensional vector. A 	\emph{polytope} is a bounded polyhedron.

A row of a constraint matrix $Q$ is called \emph{tight} for a solution $x$ if the corresponding inequality holds with equality, i.e., $Q_i x = d_i$. A solution satisfying a set of constraints tightly is said to be on the corresponding 	\emph{face} of the polyhedron.

An 	\emph{extreme point} $z$ of a polyhedron $\mathcal{P}$ is a point that cannot be expressed as a strictly convex combination of two other distinct points in the polyhedron, i.e., there are no $z_1,z_2\in \mathcal{P}$ and $\lambda\in (0,1)$ such that $z=\lambda z_1 + (1-\lambda )z_2$. In other words, an extreme point is a vertex of the polyhedron.% and is often obtained as the unique solution to a system where at least $n$ independent constraints are tight.

We proceed to define our central problems.

\subsection{Stable Hypergraph Matching}

First, we define the Stable Hypergraph Matching (\shm) problem with ties and capacities. Let $\mathcal{H}=(V,\EE)$ be a hypergraph, where each vertex $v\in V$ has a capacity $q(v)\in \mathbb{Z}_+$ and a weak preference order $\succcurlyeq_v$ over the edges containing $v$. For a weak preference order $\succcurlyeq_v$, we use the notation $e\succ_vf$ to denote that $e$ is strictly preferred to $f$ by $v$, $e\sim_v f$ to denote that $e$ and $f$ are tied (or $e=f$) and $e\succcurlyeq_v f$ to denote that either $e\succ_v f$ or $e\sim_v f$.

\begin{definition}
A (fractional) matching $M$ is \emph{ blocked} by an edge $f$, if $M(f) < 1$ and, for every $v \in f$, either $v$ is unsaturated by $M$ or there exists an edge $f_v \in M(v)$ such that $f_v \prec_v f$.
\end{definition}

Intuitively, blocking means that the edge (e.g., a contract) is strictly better for each participating agent in it, who is already at full capacity. Hence, every vertex (agent) in the edge has an incentive to deviate in this case.

\begin{definition}
A (fractional) matching $M$ is \emph{stable} if no edge blocks it.
\end{definition}

An instance of {\sc Stable Hypergraph Matching} (\shm) is given by the tuple $(\HH, (q(v), \succcurlyeq_v)_{v\in V})$, where the goal is to find a stable matching $M$. The {\sc stable fixtures} problem is the restriction of \shm\ to graphs.

\shm\ models scenarios where no group of agents (represented as hyperedges) has an incentive to deviate, ensuring stability in multi-agent allocation problems such as group project assignments and team formation. \shm\ can also incorporate several stable matching problems, including dual admission problems, where students apply to both university courses and internships at companies, as well as the restricted case of the Hospital-Resident allocation problem with couples, given that no couple submits applications to the same position.

\subsection{College Admission with Common Quotas}

The College Admission problem with Common Quotas (\cacq) extends the classical college admission model by introducing quota constraints that apply across multiple colleges. 

Here, we are given a set of colleges $\CC = \{c_1, \dots, c_m\}$ and a set of students $\SS = \{s_1, \dots, s_n\}$. Each college $c_i$ has a quota $q(c_i)$ and a weak preference order $\succcurlyeq_{c_i}$ over acceptable students. Each student $s_k$ has a weak preference order $\succcurlyeq_{s_k}$ over acceptable colleges. These acceptability relations are modeled by a bipartite graph $G=(V; \EE)$ with vertex set $V=\SS\cup \CC$.

Additionally, we are given sets of colleges $C_1, \dots, C_N$, each with a common quota of $q(C_j)$. Each $C_j$ is a subset of $\mathcal{C}$, and a college may belong to multiple such sets. Each set $C_j$ follows a weak \emph{master preference list} $\succcurlyeq_{C_j}$, which is consistent with individual college preferences, that is, $s_k \succ_{c_i} s_{k'}$ for some $c_i \in C_j$ implies $s_k \succ_{C_j} s_{k'}$ and $s_k \sim_{c_i} s_{k'}$ for some $c_i \in C_j$ implies $s_k \sim_{C_j} s_{k'}$.

A \emph{(fractional) matching} $M$ is defined here by the inequalities
\begin{itemize}
    \item[--] $|M(s_k)|\le 1$ for $s_k\in \SS$,
    \item[--] $|M(c_i)|\le q(c_i)$ for $c_i\in \CC$ and
    \item[--] $\sum\limits_{c_l\in C_j}|M(c_l)| \leq q(C_j)$  for every $C_j\in \{ C_1,\dots ,C_N\}$.
\end{itemize}
That is, a matching satisfies all individual and common quotas and assigns each student to at most one college.

It will be more convenient to assume that for each college $c_i$, there exists a set $C_j=\{ c_i\}$ with common quota $q(C_j)=q(c_i)$ and $\succcurlyeq_{C_j}=\ \succcurlyeq_{c_i}$. Hence, from now on, we only focus on satisfying the common quota constraints.

Also for convenience, it is easier to think of the preferences $\succcurlyeq_{s_k}$ over the colleges as preferences over the incident edges to $s_k$ in $G$. Since $s_k$ and $c_i$ uniquely determine the edge $\{ s_k,c_i\}\in \EE$, this is well-defined.

\begin{definition}
An edge $e=\{ s_k, c_i\}\in \EE$ \emph{blocks} a (fractional) matching $M$ if:
\begin{itemize}
    \item[--] $|M(s_k)|<1$ or $e\succ_{s_k}e'$ for some $e'=\{ s_k,c_{i'}\} \in M(s_k)$ (for integral matchings, $M(s_k)$ has at most one element), 
    
    \item[--] any set $C_j$ containing $c_i$ (including $\{ c_i\}$) is either below its quota (i.e., $\sum\limits_{c_l\in C_j}|M(c_l)|<q(C_j)$) or has a student $s_{k'}\in \bigcup\limits_{c_l\in C_j}M(c_l)$ assigned to some $c_{l} \in C_j$ such that $s_k \succ_{C_j} s_{k'}$.
\end{itemize}
\end{definition}

Intuitively, blocking means that both the student strictly prefers to college to his current allocation, and for each set $C_j$ the college is included in, either the student can argue that the corresponding common quota is not reached, so there are empty seats left, or otherwise that a student worse than him is admitted there. 

\begin{definition}
A (fractional) matching $M$ is \emph{stable} if no edge $\{ s_k,c_i\}$ blocks $M$.
\end{definition}

An instance of the {\sc College Admission with Common Quotas} (\cacq) consists of the tuple $(G=(\SS\cup \CC,\EE),(C_i,q(C_i)_{i\in N}),(\succcurlyeq_v)_{v\in \SS\cup \CC})$ and the goal is to find a stable matching.

Stability in college admissions ensures that no student and college prefer each other over their current assignment, while maintaining global capacity constraints across multiple institutions.

Common quotas are often imposed in national university admission mechanisms, for example by restricting the total number of students that can be admitted to each department, each university and also across multiple institutions, such as limiting the total number of students admitted to a specific major nationwide. Furthermore, there are usually different quotas for self-financed and state-financed seats.

\subsection{Stable Multicommodity Flow}

The Stable Multicommodity Flow (\smf) problem models competition among multiple commodities for network resources. Given a directed graph $D = (V, \E)$ with sources $s^1, \dots, s^k$ and sinks $t^1, \dots, t^k$ (one per commodity), each vertex $v \in V$ has a weak order $\succcurlyeq_v^j$ on its incident arcs (i.e., $\rho (v)\cup \delta (v)$) for each commodity $j$. Each arc $a \in \E$ has a weak order $\succcurlyeq_a$ over commodities and integral capacities $c(a)$ and $c^j(a)$, where $c(a)$ bounds total flow and $c^j(a)$ bounds flow of commodity $j$. Intuitively, the ranking $\succcurlyeq_v^j$ represent an agent's preference over which routes or companies he prefers to send or recieve the given commodity, and the ranking $\succcurlyeq_a$ represents the preferences of the transporting agency over the commodities it is willing to deliver. For example,  a transport agency may prefer to deliver non-hazardous and non-perishable goods over those that pose logistical challenges.

In such a model, the blocking structures that may arise, are walks, along which the participating agents would have an incentive to reroute some flow. Here, a walk $W=(v_1,a_1,$  $v_2,\dots, $ $a_{l-1},v_l)$ denotes an alternating set of vertices and arcs, such that for all $i\in [l-1]$, $a_i=v_iv_{i+1}$. 

\begin{definition}
A walk $W=(v_1,a_1,v_2,\dots, a_{l-1},v_l)$ \emph{blocks} a flow $f=(f^1,\dots, f^k)$ for commodity $j$ if:
\begin{enumerate}
    \item $f^j<c^j(a_i)$ for $i\in [l-1]$, that is, every arc has available capacity for commodity $j$,
    \item $v_1= s^j$ or there exists some arc $b\in \delta (v_1)$ such that $f^j(b)>0$ and $a_1\succ_{v_1}^jb$, so $v_1$ can either send more flow of commodity $j$, or reroute some outgoing flow from a worse arc,
    \item $v_l=t^j$ or there exists some arc $b\in \rho (v_l)$ such that $f^j(b)>0$ and $a_{l-1}\succ_{v_l}^jb$, so $v_l$ can either send more flow of commodity $j$, or reroute some incoming flow from a worse arc; 
    \item if $\sum_{j\in [k]}f^j(a_i)=c(a_i)$, then there exists $j'\ne j$ such that $f^{j'}(a_i)>0$ and $j\succ_{a_i}j'$, that is, if an arc is saturated, then another commodity $j'$ on $a_i$ is less preferred than $j$.
\end{enumerate}
\end{definition}

\begin{definition}
A multicommodity flow $f$ is \emph{stable} if no blocking walk exists for any commodity.
\end{definition}

An instance of \smf\ consists of a tuple $(D,(s^i,t^i)_{i\in [k]},c,(c^i)_{i\in [k]}, (\succcurlyeq_v^j)_{j\in [k],v\in V},(\succcurlyeq_a)_{a\in \E})$ and the goal is to find a stable multicommodity flow.

The study of stable multicommodity flows is crucial for various real-world applications where resources need to be distributed efficiently and fairly, such as in transportation networks, communication systems, and supply chains. Understanding the computational complexity of this problem and identifying feasible methods to compute stable solutions is of significant practical importance, especially in scenarios where multiple types of goods or services must coexist within the same infrastructure.
\subsection{Scarf's Lemma}
Finally, we state Scarf's key Lemma.

\begin{lemma}[\cite{scarf1967core}] Let $Q$ be an $n\times m$ nonnegative matrix, such that every column of $Q$ has a nonzero element and let $d\in \mathbf{R^n_+}$. Suppose that every row $i$ has a strict ordering $\succ_i$ on those columns $j$ for which $Q_{ij}>0$. Then there is an extreme point of $\{ Qx\le d, \; x\ge 0\}$, that dominates every column in some row, where we say that $x\ge 0$ dominates column $j$ in row $i$, if $Q_i x=d[i]$ and $k\succcurlyeq_i j$ for all $k\in \{ 1,\dots  ,m\}$, such that $Q_{ik} x_k>0$. Also, this extreme point can be found algorithmically.
\end{lemma}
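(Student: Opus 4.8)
The plan is to prove the lemma by a constructive, Lemke-style path-following argument, which simultaneously establishes the algorithmic claim. First I would reduce to a convenient ``standard form''. By perturbing $d$ lexicographically (replacing $d$ by $d+(\varepsilon,\varepsilon^2,\dots,\varepsilon^n)$ and letting $\varepsilon\to 0$ at the very end) I may assume $d>0$ and, more importantly, that the polyhedron $\{Qx\le d,\ x\ge 0\}$ is nondegenerate, so each of its extreme points is defined by a unique basis. Appending the $n$ unit columns $e_1,\dots,e_n$ as new columns $m+1,\dots,m+n$ (each placed at a prescribed position in the ordering $\succ_i$ of its row $i$) turns the inequality system into an equality system $\hat Q\hat x=d,\ \hat x\ge 0$ whose basic feasible solutions are exactly the extreme points of the original polytope; boundedness is automatic, since $Q_{ij}>0$ forces $x_j\le d_i/Q_{ij}$, which is precisely why the hypothesis that every column has a nonzero entry is needed. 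Call a size-$n$ set of columns $B$ a \emph{feasible basis} if $\hat Q_B$ is nonsingular and $\hat Q_B^{-1}d\ge 0$; for such a $B$, in each row $i$ that is tight at the associated vertex (equivalently $m+i\notin B$) let $w_i(B)$ be the $\succ_i$-worst original column of $B$ with a positive entry in row $i$ (it exists because $d_i>0$). Say $B$ is \emph{dominating} if every original column $j\in[m]$ is dominated in some such tight row, i.e.\ $Q_{ij}>0$ and $w_i(B)\succcurlyeq_i j$. A feasible dominating basis yields exactly the extreme point promised by the lemma.

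The heart of the argument is that the feasible bases and the dominating bases each carry a ``pivoting'' structure. On the cardinal side this is the ordinary simplex/Lemke pivot: given a feasible basis $B$ and a column $r\in B$ to remove, nondegeneracy makes the ratio test's minimizer unique, and boundedness rules out an unbounded ray, so there is a \emph{unique} column to bring in that restores feasibility. On the ordinal side this is Scarf's \emph{ordinal pivot}: given a dominating basis with one distinguished ``removable'' column, there is a unique column whose exchange keeps the set dominating. I would then form a graph whose vertices are the size-$n$ column sets that are feasible bases and are ``almost dominating'' (they dominate all original columns but carry one column currently redundant for domination), joining two such configurations when they are related by one cardinal pivot or one ordinal pivot. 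Each vertex has degree at most two — at most one move of each type — so the graph is a disjoint union of paths and cycles.

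Next I would exhibit a specific starting configuration, built from the slack columns $e_1,\dots,e_n$ together with one original column, chosen using the prescribed placement of the slacks in the orderings, which has degree exactly one, i.e.\ admits exactly one of the two pivots. Following the unique path from this endpoint, the only other possible degree-one vertices are genuine solutions: a set that admits neither pivot, because it is simultaneously a feasible basis and a fully dominating basis. Since a finite union of paths and cycles has an even number of degree-one vertices, the path must terminate at such a solution; undoing the lexicographic perturbation of $d$ returns the desired extreme point of the original polytope. The whole procedure is a finite sequence of pivots — this is exactly ``Scarf's algorithm'' — which gives the algorithmic conclusion.

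The main obstacle is proving that the ordinal pivot is well-defined and, crucially, \emph{unique}: one must show that from a dominating basis with one removable column there is exactly one column whose exchange preserves the dominating property. This is the genuinely new combinatorial ingredient, since the cardinal pivot is classical linear programming; the proof is a careful case analysis tracking how the worst-present columns $w_i$ change along the row orderings. A secondary but necessary piece of bookkeeping is verifying that the chosen start configuration really has degree one and that everything is robust to degeneracy; the lexicographic perturbation of $d$ handles the latter, but one must check it does not disturb the ordinal data, which requires a little care.
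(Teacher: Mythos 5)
The paper does not prove this statement at all: it is imported verbatim as Scarf's lemma with a citation to \cite{scarf1967core}, so there is no in-paper argument to compare yours against. Judged on its own, your sketch is a faithful outline of Scarf's original proof (and of the Lemke--Howson-style expositions of it): slack columns to pass to a bounded equality system, lexicographic perturbation for nondegeneracy, the two pivot rules (cardinal/simplex and ordinal), a degree-at-most-two graph, a distinguished degree-one starting configuration, and the parity argument forcing termination at a set that is simultaneously a feasible basis and a dominating (ordinal) basis. The hypothesis that every column has a positive entry is indeed used exactly where you use it, to guarantee boundedness.

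Two caveats. First, your bookkeeping for the path-following graph is slightly off from the standard setup: Scarf's algorithm does not maintain a single column set that is ``a feasible basis and almost dominating''; it maintains a \emph{pair} $(B_c,B_o)$ --- a feasible (cardinal) basis and an ordinal basis --- agreeing in $n-1$ columns, and alternates a cardinal pivot that brings the column of $B_o\setminus B_c$ into $B_c$ with an ordinal pivot that brings the column of $B_c\setminus B_o$ into $B_o$, terminating when $B_c=B_o$. A single-set formulation can be made to work, but as written it is not clear that your intermediate vertices exist or have degree at most two, so this needs to be repaired or replaced by the two-set version. Second, and more importantly, you explicitly defer the proof that the ordinal pivot exists and is unique (your ``removable column'' replacement lemma). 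That lemma is the entire mathematical content of Scarf's theorem beyond classical LP; without it the parity argument has nothing to stand on. So what you have is the correct plan, with the hard step named but not carried out --- acceptable if the intent is to cite Scarf for that step, but not a self-contained proof.
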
 

Scarf's Lemma gives existential guarantees for fractional stable solutions in a large set of stable matching problems.

\section{Near-feasible Stable Hypergraph Matchings}

%\textbf{TODO: döntsd el, hogy hyperedge vagy edge}

First, we consider the Stable Hypergraph Matching problem, \shm. Our purpose is to find new capacities $q'$ close to the original ones together with a corresponding stable matching $M$ for the new capacities.

\subsection{Preparations}
To begin with, suppose that our instance of \shm\ includes ties in some preference lists. Then, in the beginning we break the ties in an arbitrary way. On one hand, the stable (fractional) matchings of this new instance are stable for the original instance too, because if an edge blocks a matching $M$ in the original instance, then it still blocks $M$ after the tie breaking. This holds because if $e\succ_v f$ then $e$ and $f$ are not tied, so breaking the ties maintain that $e\succ_v f$. 
On the other hand, if we find new $q'$ capacities, such that there is a stable matching $M$ with respect to the capacities $q'$ after the tie-breaking, then the same matching is also stable with respect to the capacities $q'$ in the original instance, by the same reasoning.

Therefore, we obtain that it is enough to work with strict preferences $\succ_v$ for our purposes. Hence, in the remainder of this section, we suppose that all preference lists are strict.

We continue by describing how to obtain a stable fractional matching.
To do this, for an instance $I=(\HH,  (q(v),\succ_v)_{v\in V})$ of \shm\ (with strict preferences), we create a polyhedron satisfying the conditions of Scarf's lemma the following way.

First, we create a matrix $Q$ that consists of the incidence matrix $A$ of the hypergraph $\mathcal{H}$ with an additional identity matrix at the bottom. Each row in $A$ corresponds to a vertex $v\in V$ and each row in the bottom identity matrix corresponds to an edge $e\in \EE$. Let us denote these rows by $A_v$ and $I_e$ respectively.

Let the bounding vector $d$ be defined such that it is the capacity $q(v)$ in vertex $v$'s row, and $1$ in the rows corresponding to the bottom identity matrix.

For the rows of the bottom identity matrix, there is only a single nonzero element, so their preference over the nonzero entries is trivial.
The strict preferences of the rows corresponding to vertices $v\in V$ are created according to $\succ_v$ (which we assumed to be strict). Note that since the nonzero elements correspond exactly to the edges containing $v$, this is well-defined. 

It is straightforward to verify, that the characteristic vectors of the (fractional) stable matchings of $\mathcal{I}$ are in a one-to-one correspondence with the dominating (not necessarily extreme) points of the polyhedron given by $ \{  Qx\le d,x\ge 0\} = \{  Ax\le q, 0\le x\le 1\}$.
Recall that we say that $x\ge 0$ dominates column $j$ in row $i$, if $Q_i x=d[i]$ and $k\succcurlyeq_i j$ for all $k\in \{ 1,\dots  ,m\}$, such that $Q_{ik} x_k>0$ and a point is dominating, if it dominates every column in at least one row.

Hence, we can find a stable fractional solution with Scarf's algorithm.

Next, we argue that we can assume that all fractional stable matchings saturate every vertex $v\in V$. Hence, we can impose $Ax=q$ (that is, add $-Ax\le -q$ to the inequalities) in the beginning, and thus our polyhedron will be $\mathcal{P}=\{ Qx\le d, x\ge 0\} =\{ Ax=q,0\le x\le 1\}$.

\begin{claim}
\label{claim:hyp-saturated}
We can suppose that all fractional stable matchings in the starting instance $I$ satisfy $|M(v)|=q(v)$ for $v\in V$.
\end{claim}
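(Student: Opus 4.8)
The claim asserts that we may assume, without loss of generality, that every fractional stable matching in the starting instance saturates all vertices. My plan is to reduce an arbitrary instance to one with this property by a standard padding trick: for each vertex $v\in V$, introduce enough "dummy" loop-like edges that $v$ strictly least-prefers, so that in any stable matching $v$ is forced to fill the remaining slack with these dummy edges, and then observe that deleting the dummy edges from such a matching recovers a stable matching of the original instance (and conversely).

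Concretely, I would add, for each vertex $v$, a collection of $q(v)$ new singleton edges $\{v\}$ (or, if the framework requires distinct edges, $q(v)$ parallel copies $d^v_1,\dots,d^v_{q(v)}$ of the loop at $v$), each ranked by $v$ below every original edge containing $v$. Call the enlarged instance $I'$. First I would check that $I'$ has the same answer as $I$ in the relevant sense: given a stable matching $M'$ of $I'$, its restriction $M$ to the original edges is stable in $I$, because any original edge $f$ blocking $M$ in $I$ would still block $M'$ in $I'$ — every $v\in f$ that is saturated in $M$ is saturated in $M'$ (dummy edges only add to $|M'(v)|$) and prefers $f$ to its dummy edges as well; and $M'(f)=M(f)<1$. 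Conversely, any stable $M$ of $I$ extends to a stable $M'$ of $I'$ by setting $M'(d^v_i)$ so that $v$ becomes saturated, using that the dummies are least-preferred so they never participate in a blocking edge. Second, and this is the key point, I would argue that every fractional stable matching of $I'$ saturates every vertex: if $v$ were unsaturated in a stable $M'$, then some dummy edge $d^v_i$ has $M'(d^v_i)<1$, and since $d^v_i$ is a singleton edge whose only vertex $v$ is unsaturated, $d^v_i$ blocks $M'$ — contradiction. Hence in $I'$ all stable matchings are "full", and we may add the equalities $Ax=q$ to the polyhedron as claimed.

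The main obstacle I anticipate is bookkeeping around the hypergraph formalism — whether the model permits singleton edges and parallel edges, and making sure the incidence matrix $A$ and Scarf polyhedron are set up so the dummy edges genuinely force saturation without altering the dominance structure on the original columns. One must also confirm that adding dummies does not change which original matchings are stable (done above) and that the correspondence between stable fractional matchings and dominating points of $\{Ax\le q,\,0\le x\le 1\}$ is preserved under the reduction, so that Scarf's algorithm applied to $I'$ still yields a usable fractional stable solution of the (saturated) instance. Everything else — the equivalence of blocking conditions before and after padding, and the fact that least-preferred singleton edges can never block a saturated matching — is routine verification of the definitions.
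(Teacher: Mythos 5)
Your proposal is correct and follows essentially the same route as the paper: pad each vertex $v$ with $q(v)$ least-preferred singleton edges $\{v\}$ so that any unsaturated vertex yields a blocking dummy edge, and then verify that deleting the dummies from a stable matching of the padded instance (with whatever capacities $q'$ were found) leaves a stable matching of the original instance, since a blocking edge of the restriction would either find the vertex still saturated by strictly better original edges or find a dummy edge in its assignment that it strictly dominates. The only difference is that you additionally check the converse extension of stable matchings from $I$ to $I'$, which the paper omits as it is not needed.
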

\begin{proof}

To prove the claim, for each vertex $v\in V$, we add edges $e_v^1,\dots, e_v^{q(v)}$ that are strictly worst for $v$ in this order. All these edges contain the single vertex $v$. Let this new hypergraph be $\HH'=(V,\EE')$.

It is clear that in this new instance, any fractional stable matching saturates all vertices. If some $v\in V$ is left unsaturated, then there exists some edge $e_v^j=\{ v\}$ such that $M(e_v^j)<1$, so $e_v^j$ blocks $M$, a contradiction.

What we have to show for our purposes is that if we find capacities $q'$ and a stable matching $M'$ with respect to the capacities $q'$, then deleting all edges of the form $e_v^j$ leads to a matching $M$ that is stable in the original hypergraph $\HH =(V,\EE)$ too with respect to the capacities $q'$.

Suppose that this matching $M$ admits a blocking edge $e\in \EE$. As $e$ did not block $M'$ in $\EE'$ with respect to $q'$, there was a vertex $v\in V$, such that $|M'(v)| =q'(v) $ and for each $f\in M'(v)$, we had $f\succ_v e$. If $|M(v)|=q'(v)$ holds too, then we get that $e$ does not block by the previous observation. Else, there was an edge $e_v^j$ with $M'(e_v^j)>0$. However, by the construction, $e\succ_v e^j_v$, contradicting $f\succ_v e$ for all $f\in M'(v)$.
\end{proof}

\subsection{The Iterative Rounding algorithm}

We start with a structural observation. Take the constructed polyhedron $\{ Qx\le d, x\ge 0\}$.

\begin{lemma}
\label{lemma:hyp-1}
Let $x^*$ be the dominating solution found by running Scarf's algorithm on $\{  Qx\le d, x\ge 0\}$. Let $y$ be an integer vector such that, if $x^*[e]=0$ or $1$, then $y[e]=0$ or $1$, respectively.
Furthermore, let 
$q'(v)=\begin{cases} A_v y & \text{ if } A_v x^*=q(v)\\  \max\{ q(v), A_v y\} & \text{ 
otherwise} 
\end{cases}$. Then the matching $M$ whose characteristic vector is $y$ is stable and feasible with respect to $q'$.
\end{lemma}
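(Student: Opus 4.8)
The plan is to verify the two required properties of $M$ separately: feasibility with respect to $q'$, and stability.

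\textbf{Feasibility.} This part is essentially definitional. For a vertex $v$ with $A_v x^* = q(v)$, we set $q'(v) = A_v y = |M(v)|$, so the capacity constraint at $v$ holds with equality. For a vertex $v$ with $A_v x^* < q(v)$, we set $q'(v) = \max\{q(v), A_v y\} \geq A_v y = |M(v)|$, so the constraint again holds. Hence $A y \leq q'$, i.e. $M$ is a feasible (integral) matching for capacities $q'$. (One small point to note: all entries of $y$ are nonnegative integers, since $y$ is an integer vector that agrees with $x^* \in [0,1]^{\EE}$ on the coordinates where $x^*$ is $0$ or $1$ — but actually we need $y \in \{0,1\}^{\EE}$ for $M$ to be a genuine matching; I would either assume this is part of the hypothesis on $y$, or observe that the statement implicitly intends $y$ to be a $0/1$ vector, as is standard for characteristic vectors of matchings.)

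\textbf{Stability.} This is the crux. Suppose, for contradiction, that some edge $f \in \EE$ blocks $M$. Then $M(f) = y[f] = 0$ (so in particular $x^*[f] \neq 1$), and for every $v \in f$, either $v$ is unsaturated in $M$ with respect to $q'$, or there is an edge $f_v \in M(v)$ with $f_v \prec_v f$. I want to contradict the fact that $x^*$ is a dominating point of the Scarf polyhedron — specifically, that $x^*$ dominates every column in some row. The key observation is that the row $I_f$ corresponding to edge $f$ cannot be the dominating row: since $y[f] = 0$ and $x^*[f] \neq 1$, and the $I_f$-constraint is $x[f] \leq 1$, domination in row $I_f$ would require $x^*[f] = 1$, which fails. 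So $x^*$ must dominate in some vertex row $A_w$ for a vertex $w \in V$; by Claim~\ref{claim:hyp-saturated} (reflected in the polyhedron via $Ax = q$, so every vertex row is tight) together with the fact that $x^*$ is dominating, $x^*$ dominates in row $A_w$ means $A_w x^* = q(w)$ and every edge $e$ with $x^*[e] > 0$ satisfies $e \succcurlyeq_w f$ — wait, more carefully: a dominating point dominates \emph{every} column in \emph{some} row, so there is a vertex $w$ such that for all edges $e \ni w$ with $x^*[e] > 0$ we have $e \succcurlyeq_w f$; since preferences are strict and $f \ni w$ would give $x^*[f]>0$ only if... hmm, I need to be careful and instead apply domination at the specific vertex $w \in f$.

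The cleaner route: Since $f$ blocks $M$, consider any $v \in f$. If $A_v x^* = q(v)$, then $q'(v) = A_v y = |M(v)|$, so $v$ is \emph{saturated} in $M$ w.r.t. $q'$; hence the blocking condition at $v$ forces an edge $f_v \in M(v)$ with $f_v \prec_v f$, and $f_v \in M(v)$ means $y[f_v] = 1$, so $x^*[f_v] = 1 > 0$. Thus $x^*$ does \emph{not} dominate column $f$ in row $A_v$, because $A_{v,f_v} x^*[f_v] > 0$ but $f_v \prec_v f$ violates the requirement $f_v \succcurlyeq_v f$. If instead $A_v x^* < q(v)$, then $x^*$ trivially does not dominate column $f$ in row $A_v$ (the row is not tight for $x^*$). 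Either way, for \emph{every} $v \in f$, row $A_v$ fails to dominate column $f$. Combined with the earlier observation that row $I_f$ does not dominate column $f$ (as $x^*[f] \neq 1$), and that rows $A_v$ for $v \notin f$ and $I_e$ for $e \neq f$ have a zero entry in column $f$ (so domination is vacuous/impossible there — a dominating row must have a positive entry in the column it dominates), we conclude that \emph{no} row of $Q$ dominates column $f$. This contradicts the defining property of the dominating point $x^*$, which must dominate every column in at least one row. Hence no blocking edge exists and $M$ is stable with respect to $q'$.

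\textbf{Main obstacle.} The only genuinely delicate step is matching up the block-in-$M$ condition at a vertex $v$ with the fail-to-dominate-column-$f$ condition at row $A_v$, and correctly handling the saturated-versus-unsaturated case split induced by the definition of $q'$; the role of Claim~\ref{claim:hyp-saturated} is to guarantee that in the \emph{starting} polyhedron every vertex row is tight for $x^*$, which streamlines the argument, though as shown above the proof goes through even without invoking tightness of every row, since the case $A_v x^* < q(v)$ is handled directly. I would write the argument in the contrapositive form above, as it makes the correspondence between Scarf-domination and stability transparent.
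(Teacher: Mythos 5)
Your proof is correct and follows essentially the same route as the paper's: both argue that a blocking edge $e$ would contradict the fact that the dominating point $x^*$ must dominate column $e$ in some row, splitting into the row $I_e$ versus the rows $A_v$ with $v\in e$ (your version merely phrases this contrapositively, ruling out every row one by one). The only slip is the claim that $y[f_v]=1$ forces $x^*[f_v]=1$ — the hypothesis only yields $x^*[f_v]>0$ — but positivity is all your argument actually uses, so nothing breaks.
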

\begin{proof}
By the definition of the capacities $q'$, it is immediate that $M$ respects all capacity constraints and thus is a matching.

Let us suppose for the contrary that $M$ is not stable. Then, there exists an edge $e\in \mathcal{E}$ that blocks $M$. 

Since $x^*$ was a dominating solution, it must have dominated column $e$ in some row. If it dominated $e$ in row $I_e$, then $x^*[e]=1=y[e]=M(e)$, contradicting that $e$ blocks. Otherwise, it was dominated in some row $A_v$ such that $v\in e$. This implies that $A_v x^* = q(v)$, so $A_v y = q'(v)$, and hence $v$ is saturated in $M$. Furthermore, for every $f_v\in \mathcal{E}$ such that $v\in f_v$ and $x^*[f_v]>0$, $f_v$ was weakly better for $v$ than $e$ (which implies either $f_v\succ_ve$ or $f_v=e$). However, then by the definition of $q'$, and that  $y[f]=0$, whenever $x^*[f]=0$, we get that $v$ is saturated in $M$ and weakly prefers every $f_v\in M(v)$ to $e$, we get that $e$ cannot block $M$, contradiction again. 
\end{proof} 

Let $\ell \ge 2$ be the size of the largest edge in $\mathcal{H}$. That is, $|e|\le \ell$ for all $e\in \mathcal{E}$.

Our iterative rounding algorithm is described in Algorithm~\ref{alg:shm}. First, it computes a dominating solution $x^*$ by Scarf's algorithm and sets $z=x^*$.
To bound the change in the aggregate capacity, we add an aggregate capacity row to the incidence matrix $A$, that is a row $\sum_e x[e] \cdot |e| = \sum_{v\in V} q(v)$. By Claim~\ref{claim:hyp-saturated}, we can assume that every row in $x^*$ is tight, so $x^*$ satisfies this as $\sum_{e\in \EE}x^*[e]\cdot |e|=\sum_{v\in V}A_vx^* =\sum_{v\in V} q(v)$ .

Then, until our vector is not integral, we check whether there is a row we can safely remove without worrying about the capacities being violated too much in the future. For this, we delete a row $A_v$, if $A_v(\lceil z \rceil - \lfloor z \rfloor ) \le \ell$ and if there is no such row, but there is only one remaining fractional component, then we delete the aggregate capacity row.

We will show that this algorithm is guaranteed to terminate.

\begin{algorithm}[t]
	\SetAlgoNoLine
	\KwIn{An instance of \shm.}
	\KwOut{A stable matching $M$ with respect to some capacities $q'$ satisfying $|q'(v)-q(v)|\le \ell -1$ and $|\sum\limits_{v\in V}(q'(v)-q(v))|\le \ell -1$.}
	  Run Scarf's algorithm on $\PP =  \{ Qx\le d,x\ge 0\}$ and obtain a dominating solution  $x^*$. 

    Initialize $z = x^*$.

    Add an aggregate capacity row to the incidence matrix $A$, that is a row $\sum\limits_{e\in \EE} x[e] \cdot |e| = \widetilde{q}(V) := \sum\limits_{v\in V} q(v)$.
    
	\While{$z$ is not integral}{
		\If{there is a row $v$ such that $A_v(\lceil z \rceil - \lfloor z \rfloor ) \le \ell$}{
			Delete it.
		}
		\ElseIf{there is no such row but there are at most one $z[e]$ component that is fractional}{
			Delete the aggregate capacity constraint. 
		}
		Consider the polyhedron given by the remaining constraints of $\PP$ along with the conditions $x[e] = z[e]$ if $z[e]$ is integer, and find the extreme point $z^*$ that maximizes $\sum_{e\in \EE}z[e]\cdot |e|$.

    	Update $z := z^*$.
    }
    Output the obtained matching $M$ corresponding to $z$ and the $q'$ capacities as defined in Lemma \ref{lemma:hyp-1}.
	\caption{Iterative Rounding for \shm}
	\label{alg:shm}
\end{algorithm}

\begin{theorem}
For any instance of \shm,
Algorithm \ref{alg:shm} gives a vector $q'$ and a matching $M$ such that $|q'(v)-q(v)|\le \ell -1$ for all $v\in V$, $\sum q(v)\le \sum q'(v)\le \sum q(v)+\ell -1$, such that $M$ is a stable matching with respect to capacities $q'$. Furthermore, after finding the initial dominating point $x^*$ in $\PP$ with Scarf's algorithm, the running time is polynomial.
\end{theorem}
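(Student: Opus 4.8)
The plan is to analyze Algorithm~\ref{alg:shm} via the standard iterative-rounding argument: at each iteration, the current vector $z$ is an extreme point of the surviving polyhedron, and one uses a counting/dimension argument to show that some row is "safe" to delete, so the algorithm makes progress; when $z$ finally becomes integral, Lemma~\ref{lemma:hyp-1} delivers the stable matching together with the perturbed capacities $q'$, and one then bounds $|q'(v)-q(v)|$ and the aggregate change. So the proof splits into three parts: (i) the algorithm terminates (in polynomial time after Scarf), (ii) the per-vertex bound $|q'(v)-q(v)|\le \ell-1$ holds, and (iii) the aggregate bound $\sum q(v)\le \sum q'(v)\le \sum q(v)+\ell-1$ holds.

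For (i), the key invariant I would maintain is that at the start of each \textbf{while} iteration $z$ is an extreme point of the polyhedron $\PP'$ defined by the currently surviving rows of $Q$ (the vertex rows not yet deleted, the bottom identity rows, nonnegativity) together with the fixing constraints $x[e]=z[e]$ for the already-integral coordinates and possibly the aggregate row. Let $F$ be the set of fractional coordinates of $z$; each such coordinate strictly satisfies both its lower bound $0$ and its upper bound $1$, so none of the bottom identity constraints is tight on $F$. Since $z$ is an extreme point, the tight constraints among the surviving ones must span $\mathbb{R}^F$, hence there are at least $|F|$ tight \emph{vertex} rows $A_v$ (plus possibly the aggregate row) restricted to the coordinates $F$. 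Now do the averaging/counting: each fractional coordinate $e\in F$ contributes $|e|\le \ell$ to exactly the rows $A_v$ with $v\in e$, so the total "fractional incidence mass" $\sum_{v}A_v(\lceil z\rceil-\lfloor z\rfloor)=\sum_{e\in F}|e|\le \ell|F|$, summed only over surviving tight vertex rows. If every surviving tight vertex row had $A_v(\lceil z\rceil-\lfloor z\rfloor)\ge \ell+1$, then with at least $|F|$ such rows we would get mass $\ge (\ell+1)|F|>\ell|F|$ unless the aggregate row is among the spanning tight rows and absorbs the slack — which can only happen when $|F|$ tight vertex rows no longer suffice, i.e. when there are at most... actually the clean statement is: either some surviving vertex row satisfies $A_v(\lceil z\rceil-\lfloor z\rfloor)\le \ell$ (and we delete it), or the tight vertex rows number exactly $|F|-1$ and the aggregate row is needed, which forces $|F|$ small; pushing the counting shows this residual case is exactly "$|F|\le 1$ fractional coordinate and no deletable row", where the algorithm deletes the aggregate row and then $z$ must become integral. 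Either way an iteration strictly decreases $|F|$ or the number of surviving rows, so after polynomially many iterations (each requiring one LP extreme-point computation) $z$ is integral. The main obstacle here is making the dichotomy in the counting argument airtight — in particular handling the interaction of the aggregate row with the vertex rows and confirming that when no vertex row is deletable the only possibility is $|F|\le 1$; one has to be careful that the aggregate row counts fractional coordinates with multiplicity $|e|$, not $1$.

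For (ii), once $z=y$ is integral, apply Lemma~\ref{lemma:hyp-1}: $M$ with characteristic vector $y$ is stable w.r.t. the stated $q'$. For a vertex $v$ whose row $A_v$ was never deleted, $A_vx^*=q(v)$ was imposed throughout (Claim~\ref{claim:hyp-saturated}) and moreover the fixing constraints forced $A_v y=q(v)$ exactly, so $q'(v)=q(v)$. For a deleted vertex $v$, it was deleted at the moment some intermediate $\bar z$ satisfied $A_v(\lceil \bar z\rceil-\lfloor \bar z\rfloor)\le \ell$; since in that iteration $A_v\bar z=q(v)$ (it was still tight, being an extreme point with this row present) and $\lfloor\bar z\rfloor\le y\le\lceil\bar z\rceil$ is maintained afterwards (each later fixing only rounds coordinates within their current floor/ceiling, and the interval only shrinks), one gets $A_v\lfloor\bar z\rfloor\le A_v y\le A_v\lceil\bar z\rceil$ with $A_v\lceil\bar z\rceil-A_v\lfloor\bar z\rfloor\le \ell$ and $q(v)=A_v\bar z\in[A_v\lfloor\bar z\rfloor,A_v\lceil\bar z\rceil]$, hence $|A_vy-q(v)|\le \ell$; one then sharpens to $\le\ell-1$ by noting at least one coordinate $e\ni v$ with $\bar z[e]$ fractional contributes $|e|\ge 2$ but strictly between its floor and ceiling, so the realized spread $A_v\lceil\bar z\rceil-A_v\lfloor\bar z\rfloor$ is at most $\ell$ while the position of $q(v)$ strictly inside forces the integer $A_v y$ to lie within distance $\ell-1$ of it. (This is the same type of edge-case bookkeeping as in Nguyen–Vohra; I would lean on $|e|\le\ell$ and $q(v)$ integral.) Finally $q'(v)=\max\{q(v),A_vy\}\ge q(v)$, giving $0\le q'(v)-q(v)\le\ell-1$.

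For (iii), the aggregate row $\sum_e x[e]\,|e|=\widetilde q(V)$ is kept until possibly the last step; as long as it is present and tight, any integral rounding-point $z$ satisfies $\sum_e z[e]|e|=\widetilde q(V)=\sum_v q(v)$ exactly, so if the algorithm terminates without ever deleting it we get $\sum_v q'(v)=\sum_v A_v y=\sum_e y[e]|e|=\sum_v q(v)$. If the aggregate row is deleted, by the termination analysis this happens only when exactly one coordinate $e_0$ is still fractional; at that moment $\bar z$ was tight on the aggregate row, so $\sum_{e\ne e_0}\bar z[e]|e| + \bar z[e_0]|e_0| = \widetilde q(V)$ with all terms but $e_0$ integral, and rounding $\bar z[e_0]$ to $0$ or $1$ changes the sum by less than $|e_0|\le\ell$, in fact the resulting $\sum_e y[e]|e|$ lies within the integer interval forced by floor/ceiling, so $\sum_v A_v y=\sum_e y[e]|e|\in[\widetilde q(V)-(\ell-1),\widetilde q(V)+(\ell-1)]$; combining with $q'(v)\ge q(v)$ pins it down to $\sum_v q(v)\le\sum_v q'(v)\le\sum_v q(v)+\ell-1$. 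The running-time claim follows since after Scarf's call everything is polynomially many LP solves on polynomial-size polyhedra.
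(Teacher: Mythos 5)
Your overall architecture matches the paper's proof (extreme-point counting for termination, Lemma~\ref{lemma:hyp-1} for stability, tracking $A_v\lceil z\rceil$ and $A_v\lfloor z\rfloor$ for the per-vertex bound), but there are two genuine gaps. First, the termination argument: your double counting gives at least $|F|-1$ undeletable tight vertex rows each of mass at least $\ell+1$ against a total mass of at most $\ell|F|$, which only yields $|F|\le \ell+1$, not $|F|\le 1$. Since $|F|\le\ell$ already forces a deletable vertex row, the genuinely unresolved residual case is $|F|=\ell+1$, where your inequality $(\ell+1)(|F|-1)\le\ell|F|$ is tight and gives no contradiction. The paper closes this case with a separate observation: with only $\ell+1$ fractional columns and $0/1$ entries, every undeletable row must be all-ones on the fractional support, so the required $\ell\ge 2$ such rows cannot be linearly independent. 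For $|F|\ge\ell+2$ the paper uses a credit scheme that also charges the bottom identity rows $I_e$ (each fractional column gives $\frac{1}{|e|+1}$ to $I_e$), which is what lets the count strictly exceed the number of tight rows; your version, which ignores the identity rows, cannot separate these cases cleanly, as you yourself suspected.

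Second, your derivation of $\sum q(v)\le\sum q'(v)$ rests on the claim $q'(v)=\max\{q(v),A_vy\}\ge q(v)$, which misreads Lemma~\ref{lemma:hyp-1}: by Claim~\ref{claim:hyp-saturated} every row satisfies $A_vx^*=q(v)$, so the lemma sets $q'(v)=A_vy$, and this can be \emph{below} $q(v)$ for a deleted row (the guarantee is the two-sided $|q'(v)-q(v)|\le\ell-1$, consistent with your own part (ii)). Taking $q'(v)=\max\{q(v),A_vy\}$ would moreover break the stability argument, since Lemma~\ref{lemma:hyp-1} needs $v$ to be saturated in $M$ whenever $x^*$ dominated a column in row $A_v$. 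The correct source of the aggregate lower bound is a feature of the algorithm your proof never uses: each iteration selects the extreme point \emph{maximizing} $\sum_e z[e]\cdot|e|$, and since the previous $z$ remains feasible for the shrunken polyhedron, this objective is non-decreasing across iterations and starts at $\sum_v q(v)$. Your upper-bound argument after the aggregate row is deleted (one fractional coordinate, integer final value strictly below $\widetilde q(V)+|e_0|$) is fine and matches the paper.
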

\begin{proof}

We start by showing the bounds on $q'$. By the construction of the algorithm for $q'$, we get that $q'(v) = A_v z$ for the final vector $z\in \mathbb{Z}_+^{\EE}$ and also that $\sum_{v\in V}q'(v) = \sum_{e\in \EE}z[e]\cdot |e|$. By Claim~\ref{claim:hyp-saturated} it is also true for $q$ and $x^*$, that is $\sum_{v\in V}q(v) = \sum_{e\in \EE}x^*[e]\cdot |e|$.

As we only delete row $A_v$ if it satisfies $A_v(\lceil z\rceil - \lfloor z \rfloor )\le \ell$ and if $z$ has non-integer components then $A_v(\lceil z \rceil -  z) \ge 1$ and $A_v( z - \lfloor z \rfloor )\ge 1$ (using $A_vz=q(v)\in \mathbb{Z}_+$), it follows that $A_vz$ can only increase or decrease by at most $\ell -1$ during the algorithm (since every component of $z$ can only decrease until 0 or increase until 1), so the final $q'(v)$ capacities will only change by at most $\ell -1$ too.

By the fact that in each iteration we always choose $z$, such that it maximizes $\sum_{e\in \EE}z[e]\cdot|e|$, we have that $\sum_{v\in V} q(v)\le \sum_{v\in V} q'(v)$. Furthermore, it can only get strictly larger if we remove the aggregate capacity constraint. This happens only if there is exactly one $z[e]$ component that is fractional, let it be $z[f]$. This implies that  $\sum_{e\in \EE}z[e]\cdot|e|$ only change by strictly less than $|f|$ after this, and since it integral in both the start and the end, this implies that $\sum_{v\in V}q'(v)\le \sum_{v\in V}q(v)+\ell -1$.

The stability of the matching $M$ given by the output $z$ with respect to the $q'$ capacities defined as in Lemma \ref{lemma:hyp-1} is immediate, since the rounding procedure satisfies the requirements of Lemma \ref{lemma:hyp-1}.

Hence it only remains to show that the algorithm terminates, that is, if there was no row that could be deleted in the while loop, then $z$ already had to be integral.

Let us suppose the contrary. If all $A_v$ rows and the aggregate capacity row have been eliminated, then the remaining identity matrix is totally unimodular and the bounding vector is integral, so all extreme points are integer valued (see e.g., \cite{schrijver1998theory}), which would mean that $z$ is integral, contradiction. 
Hence, let us assume that there is such a remaining row. We have two cases:

\paragraph{Case a)} \textit{The aggregate capacity row has been eliminated. }

If the aggregate capacity row has been eliminated, then we know that there is at most $1$ fractional component of $z$ remaining. However, if there are no $A_v$ rows such that $A_v(\lceil z \rceil - \lfloor z \rfloor )\le \ell$, then $A_v(\lceil z \rceil - \lfloor z \rfloor )\ge \ell +1$ for any remaining row and from our assumption, we know that there is at least one. This means that, since each element of $A$ is 0 or 1, there has to be at least $\ell +1>1$ fractional components, as $\lceil z[e] \rceil - \lfloor z[e] \rfloor $ is only nonzero if $z[e]$ is fractional, contradiction. 

\paragraph{Case b):} \textit{The aggregate capacity row has not been eliminated.}

For simplicity, let us call $B$ the matrix consisting of $A$ and the aggregate capacity row at the bottom and $r$ be the vector which is $q$ appended by $\widetilde{q}(V)$.
Let $B =(B',B'')$ and $z= (z',z'')$, where $B'$ consists of the columns of $B$, where $z[e]$ is fractional and $B''$ consists columns of $B$ whose $z[e]$ components are integral. Similarly, $z'$ corresponds to the fractional components and $z''$ to the integral components of $z$, respectively.

 Since we can assume that $Bx=r$ in the beginning by Claim \ref{claim:hyp-saturated}, and we keep the rows not yet eliminated tight, we get that $B'z'=r - B''z''\in \mathbb{Z}^m$. We will use the following well known result (\cite{schrijver1998theory}):

\begin{lemma}
\label{lemma:extremepoint}
Let $y$ be an everywhere strictly positive extreme point of $\{ Qx\le  d,  x\ge 0\}$. Then the number of variables equals the maximum number of linearly independent tight rows of $Q$.
\end{lemma}
We use this lemma with \[
Q = \begin{pmatrix} B' \\ -B'\\ I \end{pmatrix}, \quad d = \begin{pmatrix} r-B''z'' \\ B''z''-r\\ 1 \end{pmatrix}
\] and $z'$. Since $z$ is an extreme point of $\{ Bx= q,0\le x\le 1 \}$, $z'$ is an everywhere positive extreme point of $\{ B'x=q-B''z'', 0\le x\le 1\} =\{ Qx\le d,x\ge 0 \}$. Indeed, if $z' = \lambda z_1' + (1-\lambda)z_2'$ for some $\lambda\in (0,1)$ and $z_1',z_2'\in \{ Qx\le d,x\ge 0\}$, then $z=\lambda (z_1',z'') + (1-\lambda ) (z_2',z'')$, with $(z_1',z''),(z_2',z'')\in \{ Bx=q,0\le x\le 1\}$, contradicting that $z$ is extreme. Observe that because $z'$ is everywhere fractional, there cannot be any tight rows at the bottom identity matrix.

Now we show that the number of linearly independent rows of $Q$ cannot be equal to the number of variables, which is equal to the number of components of $z'$. Note that the number of linearly independent tight rows of $Q$ is just the number of linearly independent rows of $B'$.

Give each $z'[e]$ component one credit. Then redistribute $z'[e]$'s credit to the rows of $Q$ the following way: if $e=\{ v_1,..,v_k\} $, then give $\frac{1}{k+1}$ credit to $B_{v_1}',..,B_{v_k}'$ and $\frac{1}{k+1}$ to $I_e$. If there are no $B_v'$ rows left, then $B'$ consists only of the aggregate capacity row, and we know there are at least $2$ fractional variables, so two components of $z'$, contradiction. Otherwise, there is a $B_v'$ row with $B_v'(\lceil z'\rceil -\lfloor z'\rfloor )\ge \ell +1$ (by $A_v(\lceil z\rceil -\lfloor z\rfloor )\ge \ell +1$), so there are at least $\ell +1$ fractional components of $z$, so $\ell +1$ components of $z'$. 

First, suppose there are only $\ell +1$. Then, by Lemma \ref{lemma:extremepoint}, there must be exactly $\ell +1$ independent rows in $B'$, and only one of them can be the aggregate capacity row, so there are at least $\ell \ge 2$ such rows of the form  $B_v'$. Also, because we cannot eliminate any of them, $B_v'(\lceil z\rceil -\lfloor z\rfloor )\ge \ell +1$ for each of them. This can only happen if there is a $1$ in every column in every $B_v'$ row meaning no two of them are not linearly independent, contradiction.

Hence, there are at least $\ell +2$ fractional components in $z$. This means, that the $I_e$ rows together got at least $\ell+2\cdot \frac{1}{\ell +1}>1$ credits. However, each $B_v'$ row got at least $\frac{1}{\ell +1}\cdot (\ell +1)=1$ credit too by $A_v(\lceil z\rceil -\lfloor z\rfloor )\ge \ell +1$ (and so $B_v'(\lceil z'\rceil -\lfloor z'\rfloor )\ge \ell +1$), so together with the aggregate capacity row they still got strictly more than 1 credit on average. Therefore, the number of components of $z'$ has to be strictly more than the number of rows of $B'$, which is at least the number of linearly independent tight rows of $Q$, contradiction. 

Hence, we obtained that if we cannot eliminate any row, then that necessarily means that $z$ is integral.

Finally, we show that the algorithm takes polynomial-time to terminate after the initial $x^*$ solution has been computed. This is because in each iteration, at least one of the rows is deleted, so there are at most $|V|$ iterations. Also, each iteration consists of checking some properties of $z$ and solving an LP, which can be done in polynomial time.
\end{proof}

Since in most practical applications of \shm, edges tend to be small, our result implies that stability can be ensured by only slight modifications in the capacities. 

In the case of the Stable Fixtures problem, which is the restriction of \shm\ to graphs, we can replace Scarf's algorithm by Tan's algorithm \cite{tan1991necessary} to obtain a fractional dominating solution $x^*$ in polynomial time. Hence, we obtain the following.

\begin{theorem}
There is a polynomial-time algorithm that, for a given instance of the Stable Fixtures problem, returns a modified capacity vector $q'$ and a matching $M$ such that $|q'(v)-q(v)|\le 1$ for all $v\in V$, $\sum\limits_{v\in V} q(v)\le \sum\limits_{v\in V} q'(v)\le \sum\limits_{v\in V} q(v)+1$ and $M$ is stable with respect to the capacities given by $q'$.
\end{theorem}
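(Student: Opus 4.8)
The plan is to apply the general machinery already developed for \shm\ verbatim, observing only that the Stable Fixtures problem is precisely the case $\ell = 2$, and then replacing the one non-polynomial ingredient—Scarf's algorithm—by a polynomial-time procedure. First I would recall that in the Stable Fixtures setting every edge has size exactly $2$, so $\ell = 2$ and the previous theorem immediately yields $|q'(v) - q(v)| \le \ell - 1 = 1$ for all $v$ and $\sum_v q(v) \le \sum_v q'(v) \le \sum_v q(v) + \ell - 1 = \sum_v q(v) + 1$, together with a matching $M$ stable with respect to $q'$. Thus the only thing left to establish is that the whole procedure runs in polynomial time, and by the running-time analysis in the previous proof, the sole bottleneck is the computation of the initial dominating fractional point $x^*$ in $\PP = \{Qx \le d,\ x \ge 0\}$.

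Next I would argue that this bottleneck disappears for graphs. Here the polyhedron $\PP$ is exactly the fractional stable matching polytope of a capacitated general (non-bipartite) graph with strict preferences—after the standard preparations of breaking ties arbitrarily and, via Claim~\ref{claim:hyp-saturated}, adding dummy worst edges so that every fractional stable matching saturates every vertex. A dominating point of $\PP$ is, by the correspondence spelled out before Claim~\ref{claim:hyp-saturated}, precisely the characteristic vector of a fractional stable matching of the Stable Fixtures instance. Tan's algorithm~\cite{tan1991necessary} produces such a fractional stable matching (a stable half-matching, in Tan's terminology) in polynomial time for any instance of the Stable Roommates / Stable Fixtures problem; I would invoke this as the replacement for Scarf's algorithm to obtain $x^*$ in polynomial time. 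A small point worth checking is that Tan's output can be taken to be an extreme point of $\PP$ (or, if not, that Lemma~\ref{lemma:hyp-1} and Algorithm~\ref{alg:shm} only require a dominating point, not an extreme dominating point—indeed the algorithm immediately re-optimizes to an extreme point in its first iteration), and that the dummy-edge reduction of Claim~\ref{claim:hyp-saturated} keeps the instance a legitimate Stable Fixtures instance of polynomial size.

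With $x^*$ in hand in polynomial time, I would then run Algorithm~\ref{alg:shm} unchanged: it performs at most $|V|$ iterations, each deleting a constraint row and solving one linear program, so the post-$x^*$ work is polynomial by the previous theorem, and correctness (feasibility of $q'$ within the stated bounds, and stability of $M$ via Lemma~\ref{lemma:hyp-1}) is inherited directly from the $\ell = 2$ specialization. Combining the two polynomial-time phases gives the claimed polynomial-time algorithm.

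The main obstacle I anticipate is not conceptual but a matter of matching conventions: one must verify that Tan's notion of a fractional/half-integral stable matching for Stable Roommates with capacities genuinely coincides with "dominating point of $\PP$" in the sense of Scarf's lemma as set up here—in particular that Tan's stability notion agrees with the blocking definition of \shm\ restricted to graphs, that Tan's algorithm extends to vertex capacities $q(v)$ (the Stable Fixtures generalization, cf.~\cite{Irving2007fixtures}), and that it still delivers a saturating solution after the Claim~\ref{claim:hyp-saturated} padding. Once these identifications are made, the rest of the argument is a direct quotation of the preceding theorem with $\ell$ set to $2$.
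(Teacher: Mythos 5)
Your proposal is correct and follows essentially the same route as the paper, which derives this theorem directly from the general \shm\ theorem with $\ell = 2$ and simply substitutes Tan's algorithm~\cite{tan1991necessary} for Scarf's algorithm to obtain the initial fractional dominating solution in polynomial time. The additional consistency checks you flag (that Tan's half-matching is a dominating point of $\PP$, survives the padding of Claim~\ref{claim:hyp-saturated}, and need not itself be extreme since the first LP re-optimization handles that) are sensible and go slightly beyond what the paper writes, but do not change the argument.
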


We remark that in the Stable Fixtures case, it is a well-known result that there exist capacities $(q'(v))_{v\in V}$ with $|q'(v)-q(v)|\le 1$ such that there is a stable matching with respect to the capacities given by $q'$ (see e.g., \cite{biro2024strong}), but our algorithm has the additional benefit of only changing the sum of the capacities by at most one too (there was no previous approach that could also guarantee this).

\section{Near-feasible Solutions for College Admission with Common Quotas}

In this section we proceed to the College Admission with Common Quotas (\cacq) problem.

Again, by analogous arguments as in the previous section, by tie-breaking we can suppose that the preferences are strict for our purposes. 

Let us suppose that each college is in at most $\ell$ sets with a common quota, (recall that for each college $c_i$, we assume that there exists a set $\{ c_i\}$ only containing $c_i$, so $\ell \ge 1$). 

Again, we start by showing how to find a stable fractional solution with Scarf's algorithm. Let the bipartite graph corresponding to the instance be $G=(\SS \cup \CC, \EE)$, where the edges $e\in \EE$ represent the acceptable student-college pairs $\{ s_j,c_i\}$.

We create a matrix $Q$, by adding a row for each college set $C_j, j\in N$ with a common quota $q(C_j)$ and also for each student $s_j\in \SS$, and a column for each edge $e\in \EE$. There is a 1 in row $C_j$, column $e= \{s_k,c_i\}$, if and only if $c_i\in C_j$, otherwise the corresponding entry is 0. There is a 1 in row $s_k$, column $e=\{s_j,c_i\}$ if and only if $s_k=s_j$, otherwise the entry is 0. 

The strict rankings of the rows are inherited from the corresponding master list/preference list with the addition that for a row $C_j$, if two columns have the same student $s_k$, then $C_j$ ranks these two columns according to the student $s_k$'s preferences. That is, for $c_i,c_j\in C_l$ and $s_k,s_p\in \SS$, row $C_l$ prefers $\{ s_k,c_i\}$ to $\{ s_p,c_j\}$ if and only if $s_k\succ_{C_l}s_p$ or $s_k=s_p$ and $c_i\succ_{s_k}c_j$.

The bounding vector $d$ is defined by setting it to $q(C_j)$ for a row $C_j$ and $1$ for a row $s_k$. The obtained polyhedron $\{ Qx\le d, x\ge 0\}$ with the created rankings satisfies the conditions of Scarf Lemma.

Note that the additional difficulty in this case is that the 'capacities' of the students are not allowed to be changed here, they must remain one.

%Let $x^*$ be a dominating solution. We claim that $x^*$ defines a fractional stable matching $M$. Suppose a pair $(s_k,c_i)$ blocks $M$. Then, $s_k$ is not fully assigned or there is a college $c_{i'}$ with $M(s_k,c_{i'})>0$ and $c_i\succ_{s_k}c_{i'}$ and for each coalition $C_j$ containing $c_i$, either it is not saturated or there is an $s_{k'}$ with $s_{k}\succ_{C_j}s_{k'}$ with $M(s_{k'},c_j)>0$ for a college $c_j\in C_j$. But, $x^*$ was a dominating solution, and this means that the column $(s_k,c_i)$  is not dominated at row $s_k$ nor at any row $C_j$, with $c_i\in C_j$, contradiction.

\begin{lemma}
\label{lemma:cacq-1}
Let $x^*$ be the dominating solution found by running Scarf's algorithm on $\{ Qx\le d, x\ge 0\}$. Let $y\ge 0$ be an integer vector such that:
\begin{enumerate}
    \item If $x^*[e]=0$, then $y[e]=0$,
    \item If $Q_{s_k} x^*=1$, then $Q_{s_k} y=1$, otherwise $Q_{s_k} y\le 1$.
\end{enumerate}

Furthermore let $q'(C_j)=\begin{cases}
    Q_{C_j} y & \text{if } Q_{C_j} x^*=q(C_j) \\ \max\{ q(C_j), Q_{C_j} y\} &\text {otherwise}
\end{cases}$. Then, the matching $M$ given by its characteristic vector $y$ is stable and feasible with respect to $q'$.
\end{lemma}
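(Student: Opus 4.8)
The plan is to follow the template of Lemma~\ref{lemma:hyp-1}, now accounting for the fact that the matrix $Q$ built for \cacq\ has two kinds of rows: the student rows $s_k$ and the college-set rows $C_j$ (and no bottom identity matrix, since $M(e)\le|M(s_k)|\le 1$ is implied). First I would dispatch \emph{feasibility}. Because $|M(s_k)|=Q_{s_k}y$, condition~(2) gives $|M(s_k)|\le 1$; because $\sum_{c_l\in C_j}|M(c_l)|=Q_{C_j}y$ and $q'(C_j)\ge Q_{C_j}y$ by construction, every common quota holds — and since each singleton $\{c_i\}$ is one of the sets $C_j$, the individual college quotas are included automatically. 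Hence $M$ is a matching with respect to $q'$.

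For \emph{stability} I would argue by contradiction: suppose an edge $e=\{s_k,c_i\}$ blocks $M$ with respect to $q'$. The only entries of column $e$ that are nonzero in $Q$ lie in row $s_k$ and in the rows $C_j$ with $c_i\in C_j$, so the dominating point $x^*$ must dominate column $e$ in one of these rows, and I would split into the two cases accordingly.

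If $x^*$ dominates $e$ in row $s_k$, then $Q_{s_k}x^*=1$, so $Q_{s_k}y=1$ by condition~(2), meaning $M$ assigns $s_k$ a unique edge $e''$ with $y[e'']>0$; condition~(1) gives $x^*[e'']>0$, so domination yields $e''\succcurlyeq_{s_k}e$, which (together with $|M(s_k)|=1$) falsifies the student-side requirement of the blocking condition ($|M(s_k)|<1$ or $e\succ_{s_k}e''$) — a contradiction. If $x^*$ dominates $e$ in a row $C_j$ with $c_i\in C_j$, then $Q_{C_j}x^*=q(C_j)$, so by the definition of $q'$ we get $q'(C_j)=Q_{C_j}y=\sum_{c_l\in C_j}|M(c_l)|$, i.e.\ $C_j$ is exactly at its modified quota; then for $e$ to block via $C_j$ there must be a student $s_{k'}$ matched to some $c_l\in C_j$ with $s_k\succ_{C_j}s_{k'}$, but the edge $e'=\{s_{k'},c_l\}$ has $y[e']>0$, hence $x^*[e']>0$ by condition~(1), hence $e'\succcurlyeq_{C_j}e$ by domination in row $C_j$; unwinding the definition of the row-$C_j$ order (which refines the master list $\succcurlyeq_{C_j}$ by students' own preferences only to break ties between two edges sharing a student) shows $s_{k'}\succcurlyeq_{C_j}s_k$, contradicting $s_k\succ_{C_j}s_{k'}$. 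Thus no blocking edge exists and $M$ is stable with respect to $q'$.

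The step I expect to be the main obstacle is this last implication in the college-set case: one must check carefully that the composite strict order installed on the columns of a row $C_j$ genuinely encodes "$e'$ is at least as desirable for $C_j$ as $e$" in a way compatible with the blocking condition's "some student worse than $s_k$ is admitted in $C_j$" clause, and that equality of edges/students is handled as the boundary tie case. The student-row case and feasibility, by contrast, are essentially bookkeeping with conditions~(1), (2) and the definition of $q'$.
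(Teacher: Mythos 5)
Your proposal is correct and follows essentially the same route as the paper's proof: feasibility from condition~(2) and the definition of $q'$, then a contradiction argument splitting on whether the dominating point $x^*$ dominates the blocking column $e=\{s_k,c_i\}$ in row $s_k$ or in a row $C_j\ni c_i$, using condition~(1) to transfer support from $y$ back to $x^*$. Your extra care in unwinding the composite row-$C_j$ order (noting that student-preference refinement only breaks ties between columns sharing a student, so $e'\succcurlyeq_{C_j}e$ still yields $s_{k'}\succcurlyeq_{C_j}s_k$) is exactly the point the paper handles with its remark that if $s_k$ is already in $C_j$ he sits at a better college, so no gap remains.
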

\begin{proof}
First note that by $Q_{s_k}x^*\le 1$, we also have that $x^*\le 1$.
Hence, the feasibility of $M$ follows trivially from the definition of $q'$.  

Let us suppose that $M$ is not stable. Then, there exists a blocking edge $e=\{ s_k,c_i\}$. Since $x^*$ was a dominating solution, it must have dominated column $e$ in some row. 

If $x^*$ dominated it in row $s_k$, then $Q_{s_k}x^*=1$ and for each $f=\{s_k,c_{i'}\}$, with $x^*[f]>0$, it holds that $c_{i'}\succcurlyeq_{s_k}c_i$. By the two conditions on $y$, $s_k$ is assigned in $M$ and to a college $c_j$ with $x^*[\{s_k,c_{j}\}]>0$, so $e$ does not block, contradiction.

If $e$ was dominated in some row $C_j$ of $Q$, then $c_i\in C_j$, and this means that $Q_{C_j} x^*=q(C_j)$ and for each $s_{k'}$ and $c_{i'}\in C_j$ with $x^*[\{s_{k'},c_{i'}\}]>0$, we have that $s_{k'}\succcurlyeq_{C_j}s_k$. Hence, by the definition of $q'$ and the first condition on $y$, $C_j$ is saturated with weakly better students than $s_k$ (meaning $s_k$ can be in some other college in $C_j$ than $c_i$) and if $s_k$ is in $C_j$, then he is at a better college than $c_i$ by the definition of the row rankings, so $e$ does not block, contradiction.
\end{proof} 

We proceed to describe our iterative rounding algorithm, illustrated in Algorithm~\ref{alg:cacq}. 

The idea is similar. We first find a dominating solution $x^*$ using Scarf's algorithm and set $z=x^*$. Then, until the current $z$ is not integral, we find a row $C_j$ that we can safely remove without worrying about the quotas being violated too much in the future.

\begin{algorithm}[t]

	\SetAlgoNoLine
	\KwIn{An instance of \cacq.}
	\KwOut{A stable matching $M$ with respect to some common quotas $q'$ satisfying $|q(C_j)-q'(C_j)|\le 2\ell -1$ for all $C_j\in \{ C_1\dots,C_N\}$.}
	  Run Scarf's algorithm on $\PP =  \{ Qx\le  d, x\ge 0\}$ and obtain a dominating solution  $x^*$. 

    Initialize $z = x^*$.

	\While{$z$ is not integral}{
		\If{there is a non-tight row $C_j$, such that $Q_{C_j}(\lceil z \rceil - \lfloor z\rfloor )\le 2\ell -1$}{
			Delete it.
		}
		\ElseIf{there is a tight row $C_j$, such that $Q_{C_j}(\lceil z \rceil - \lfloor z\rfloor )\le 2\ell$}{
			Delete it. 
		}
		Consider the polyhedron given by the remaining constraints of $\PP$ along with the conditions $x[e] = z[e]$ if $z[e]$ is integer, and  $Q_{s_k} z=1$, if $Q_{s_k} x^*=1$. Then, find an extreme point $z^*$.

    	Update $z := z^*$.
    }
    Output the obtained matching $M$ corresponding to $z$ and the common quotas $q'$ as defined in Lemma \ref{lemma:cacq-1}.
	\caption{Iterative Rounding for \cacq}
	\label{alg:cacq}
\end{algorithm}

\begin{theorem} For any instance of \cacq, Algorithm \ref{alg:cacq} finds $q'(C_j)$ common quotas, such that $|q(C_j)-q'(C_j)|\le 2\ell -1$ for all set of colleges $C_j\in \{ C_1\dots,C_N\}$, and a stable and feasible matching $M$, with respect to $q'$. Furthermore, after finding the initial solution $x^*$ with Scarf's algorithm, the running time is polynomial.
\end{theorem}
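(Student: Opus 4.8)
The plan is to mirror the structure of the proof for \shm\ (Algorithm~\ref{alg:shm}), adapting the credit-distribution argument to account for the fact that students now appear in at most one common-quota row via the constraint $Q_{s_k}z=1$. First I would handle the easy parts. The bound $|q(C_j)-q'(C_j)|\le 2\ell-1$ follows exactly as before: a row $C_j$ is deleted only when $Q_{C_j}(\lceil z\rceil-\lfloor z\rfloor)\le 2\ell-1$ (if non-tight) or $\le 2\ell$ (if tight); in the tight case, since $Q_{C_j}z=q(C_j)\in\mathbb{Z}_+$, if $z$ still has fractional components in that row then both $Q_{C_j}(\lceil z\rceil- z)\ge 1$ and $Q_{C_j}(z-\lfloor z\rfloor)\ge 1$, so the quantity $Q_{C_j}z$ can only move by at most $2\ell-1$ over the remaining rounds (each component drifts to $0$ or $1$). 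Hence $q'(C_j)$ as defined in Lemma~\ref{lemma:cacq-1} differs from $q(C_j)$ by at most $2\ell-1$. Stability and feasibility of the output $M$ are immediate from Lemma~\ref{lemma:cacq-1}, since the rounding keeps $z[e]$ fixed once integral, keeps every saturated student saturated, and never raises a student above $1$. The polynomial running time after computing $x^*$ is clear: each iteration deletes a row, so there are at most $N+n$ iterations, each solving one LP.

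The core of the proof is the termination claim: if no row can be deleted in the \textbf{while} loop, then $z$ is already integral. I would argue by contradiction. Suppose $z$ is not integral. If all $C_j$ rows have been eliminated, the remaining constraint matrix consists of the student rows (a submatrix of an interval/identity-type $0/1$ matrix where each column has at most one $1$) together with the bottom identity and the equalities $Q_{s_k}x=1$; this system is totally unimodular with integral right-hand side, so every extreme point is integral, contradicting non-integrality. So assume some $C_j$ row survives. Restricting to the fractional coordinates, write $B'$ for the submatrix of surviving $C_j$ rows on the fractional columns and $z'$ for the fractional part of $z$; as in the \shm\ proof, $z'$ is an everywhere-strictly-fractional extreme point of the polyhedron cut out by $B'x = (\text{integral vector})$, $0\le x\le 1$, plus the active student equalities $Q_{s_k}x = 1$. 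By Lemma~\ref{lemma:extremepoint}, the number of fractional coordinates equals the number of linearly independent tight rows, which here are the rows of $B'$ together with the active student rows (the identity rows contribute nothing since $z'$ is everywhere fractional).

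Now I would run the credit-counting argument. Give each fractional coordinate $z'[e]$ one credit, with $e=\{s_k,c_i\}$. The key structural fact is that $s_k$ lies in at most $\ell$ common-quota sets, so the column $e$ of $Q$ has at most $\ell$ ones among the $C_j$ rows, plus exactly one $1$ in the student row $s_k$ — at most $\ell+1$ relevant rows. Distribute $z'[e]$'s credit evenly: $\frac{1}{\ell+1}$ to each $C_j$ row with $c_i\in C_j$ (there are $\le\ell$ of them, but possibly fewer if some have been deleted), and the remaining credit to the student row $s_k$ (and, if some $C_j$ rows are missing, redistribute their share to $s_k$ as well, so the student row absorbs at least $\frac{1}{\ell+1}$). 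The point is that every surviving $C_j$ row has $Q_{C_j}(\lceil z'\rceil-\lfloor z'\rfloor)\ge 2\ell$ (non-deletable, tight case) or $\ge 2\ell$ as well in the relevant bound — so each such row touches at least $2\ell$ fractional coordinates and thus receives at least $2\ell\cdot\frac{1}{\ell+1} > 1$ credit, with the strict inequality since $\ell\ge 1$ forces $2\ell > \ell+1$ exactly when $\ell>1$, and for $\ell=1$ one checks the bound $2\ell-1=1$ separately. Meanwhile each student row receives some credit but the total credit is exactly the number of fractional coordinates, which would then strictly exceed the number of independent tight rows — contradicting Lemma~\ref{lemma:extremepoint}. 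The delicate point, and the main obstacle, is getting the constants to line up: unlike the \shm\ case there is no free ``aggregate'' row to play with, the student rows only carry weight $1$ each and cannot be over-charged, and one must be careful that a $C_j$ row is never charged from a column in which it has no $1$. I expect the clean way to close the gap is exactly the threshold $2\ell$ (resp.\ $2\ell-1$ for non-tight rows): it guarantees each $C_j$ row strictly over-earns its one credit even after the student rows skim off their $\frac{1}{\ell+1}$ share per incident column, and it is here that the factor $2$ (as opposed to $1$ in the \shm\ bound $\ell-1$) is forced, because each fractional edge now must simultaneously fund a college row and a student row.
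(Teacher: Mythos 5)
Your overall architecture matches the paper's proof exactly: the quota bound from the deletion thresholds, stability and feasibility from Lemma~\ref{lemma:cacq-1}, polynomial running time from one row deletion per iteration, and termination via Lemma~\ref{lemma:extremepoint} combined with a credit-counting argument on the fractional columns. The easy parts are handled correctly.

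However, your credit distribution does not close the termination argument, and this is the heart of the proof. You give each fractional column $e=\{s_k,c_i\}$ one credit, split as $\frac{1}{\ell+1}$ to each surviving $C_j$ row containing $c_i$ and the remainder (at least $\frac{1}{\ell+1}$) to the student row $s_k$. The counting argument requires \emph{every} linearly independent tight row — including every tight student row $Q_{s_k}z'=1$ — to collect strictly more than one credit, since the contradiction is ``total credit $=$ number of fractional variables $>$ number of tight rows.'' A tight student row is only guaranteed to meet two fractional columns (its entries of $z'$ lie in $(0,1)$ and sum to $1$), so under your scheme it collects as little as $\frac{2}{\ell+1}$, which is strictly less than $1$ for every $\ell\ge 2$. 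You cannot compensate by the over-earning of the $C_j$ rows, because the number of tight student rows can greatly exceed the number of surviving $C_j$ rows. The paper resolves this with an \emph{asymmetric} split: each column gives $\frac{1}{2}+\varepsilon$ to its student row and $\frac{1}{2k}-\frac{\varepsilon}{k}$ to each of the $k\le\ell$ sets containing $c_i$, with $\varepsilon$ chosen so that $(2\ell+1)\bigl(\frac{1}{2\ell}-\frac{\varepsilon}{\ell}\bigr)>1$. Then a tight student row gets $2\bigl(\frac{1}{2}+\varepsilon\bigr)>1$ from its two columns, and a surviving $C_j$ row, which by non-deletability meets at least $2\ell+1$ fractional columns (note: strictly more than $2\ell$, not ``at least $2\ell$'' as you wrote), gets more than $1$ as well. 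So your intuition that the factor $2$ in the bound $2\ell-1$ comes from each column having to fund both a student row and college rows is exactly right, but the split must give the student row essentially half of each column's credit, not a $\frac{1}{\ell+1}$ share; with your even split the constants do not line up and the argument fails for $\ell\ge 2$ (and is only borderline, not strict, for $\ell=1$).
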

\begin{proof}
By the definition of the common quotas $q'$, it is clear that $M$ gives a feasible matching.

We proceed to prove the bounds on $q'$.
Note that we only delete rows if they satisfy $Q_{C_j}(\lceil z\rceil - \lfloor z \rfloor )\le 2\ell -1$ or $Q_{C_j}(\lceil z\rceil - \lfloor z \rfloor )\le 2\ell$, if $Q_{C_j}z=q(C_j)$. Also, in the latter case, if $z$ has non-integer coordinates, where there is a 1 in row $Q_{C_j}$, then both $Q_{C_j}(\lceil z \rceil -  z) \ge 1$ and $Q_{C_j}( z - \lfloor z \rfloor )\ge 1$ as they are both positive integers, we get that $Q_{C_j}z$ can only increase or decrease by at most $2\ell -1$ during the algorithm. Therefore, the quotas $q'(C_j)$ only change by at most $2\ell -1$.

%Because we always choose $z$, such that it maximizes $||z||_1$, we have also that $\sum_{v\in V} q(v)\le \sum_{v\in V} q'(v)$. 

The stability of the matching $M$ given by the output $z$ with respect to the common quotas $q'$ defined as in Lemma \ref{lemma:cacq-1} is immediate, since the rounding procedure satisfies the requirements of Lemma \ref{lemma:cacq-1}.

Hence, it only remains to show that the algorithm terminates, that is, if there was no row that could be deleted in the while loop, then $z$ already had to be integral.

Let us suppose the contrary. If no row $Q_{C_j}$ remains, then the remaining matrix satisfies that there is only one nonzero element in each column $e=\{ s_k,c_i\}$ (in the row $Q_{s_k}$), that is a 1, so the remaining matrix is TU. Hence, the extreme point $z$ must already have been integral, contradiction.

Hence, assume that there are still some rows $Q_{C_j}$ in the remaining matrix.

Let $Q=(Q',Q'')$ and $(z',z'')$, where $Q'$ and $z'$ correspond to the fractional components of $z$, while $Q''$ and $z''$ correspond to the integral components of $z$.
Furthermore, let $d'=d-Q''z''$. It follows that $Q'z'\le d'$. 
Also, if no nonzero element remains in a row of $Q'$, then we delete it from $Q'$ and $d'$, as it is redundant.

We use Lemma \ref{lemma:extremepoint} with $Q'$ and $d'$ and the polyhedron $\{ Q'x\le d', x\ge 0\}$. Since $z$ is an extreme point of $\{ Qx\le d,x\ge 0 \}$, $z'$ is an everywhere positive extreme point of $\{ Q'x\le d',x\ge 0 \}$.

Now, we show that the number of linearly independent tight rows of $\{ Q'x\le d', x\ge 0\}$ cannot be equal to the number of variables, that is, the number of components of $z'$. %First of all, as $z'$ is everywhere positive, only rows in $Q'$ can be tight.

Give each $z'[e]$ component one credit. Then redistribute $z'[e]$'s credit to the rows of $Q'$ the following way: if $e=\{ s_k,c_i\} $, and $c_i$ is included in $k\le \ell$ sets with a common quota, then give $\frac{1}{2}+\varepsilon$ credit to $Q_{s_k}$ and $\frac{1}{2k}-\frac{\varepsilon}{k}$ to each $Q_{C_j}$ with $c_i\in C_j$, where $\varepsilon$ is a sufficiently small number, such that $(2\ell +1)(\frac{1}{2\ell}-\frac{\varepsilon}{\ell})>1$. 

Then, each tight row $Q'_{s_k}z'=1$ must receive credits from at least two columns, hence it gets $2(\frac{1}{2}+\varepsilon )>1$ credits. As we cannot eliminate the row $Q_{C_j}$ for any remaining $C_j$, we get that each tight $Q_{C_j}$ row must have at least $2\ell +1$ corresponding fractional components by  $Q_{C_j}(\lceil z \rceil - \lfloor z\rfloor )> 2\ell$, so they get credits from at least $2\ell +1$ columns. By the choice of $\varepsilon$, this implies that they get strictly more than 1 credit too. Therefore, we get that each tight row must receive strictly more than one credit, so the number of linearly independent tight rows cannot be equal to the number of components of $z'$, contradiction.
Hence, if we cannot eliminate any row, then that necessarily means that $z$ is integral.

Finally, we show that the algorithm takes polynomial time to terminate after the initial $x^*$ solution has been computed. This is because in each iteration, at least one of the rows is deleted, so there is at most $\ell m$ iterations. Also, each iteration consists of checking some properties of $z$ and solving an LP, which can be done in polynomial time.
\end{proof}

We note that in most College Admission mechanisms, $\ell$ is quite small, i.e. at most $3$ or $4$. This is because such quotas usually only arise from restricting the total number of students that can be admitted to each department, each university and each specific major nationwide. Therefore, in such cases, our results imply that with only a slight modification in the quotas, a stable solution can be guaranteed. 

\section{Near-feasible Stable Multicommodity Flows}

Finally, we study the stable multicommodity flow framework. 
We give a simple algorithm that finds new capacities, that are close to the original ones, such that we can find an integer stable multicommodity flow with respect to the new capacities. Furthermore, this flow will be almost the same size as the original one, it can only change by less than one. 
Similarly to the other problems, as \citet{kiraly} proved, an integer stable multicommodity flow need not exist, even if every capacity is an integer.

Similarly, as in the previous cases, we claim that we can suppose that the preferences are strict. Indeed, if they are not strict, then by breaking the ties an arbitrary way, we can find a stable fractional solution and then round it to a near-feasible stable solution. Then, that same solution will be stable in the original instance too. This is again because if a blocking walk $W$ blocks a flow $f$ in the original instance, then it still must block $f$ after the tie-breaking too as strictly preferred relations remain strictly preferred.

First of all, as shown by \citet{csaji2022complexity}, the Stable Multicommodity Flow problem can be reduced to an instance of Scarf's lemma, and hence we can find a stable fractional multicommodity flow to begin with Scarf's algorithm.

We start with a simple claim.
\begin{claim}
\label{claim:smf-cyc}    
If a flow $f^j$ contains a fractional component $f^j(a)$, then there exists a cycle or an $s^jt^j$ path (in the undirected sense) such that all arcs in them have fractional $f^j$ values. 
\end{claim}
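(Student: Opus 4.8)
The plan is to pass to the undirected subgraph spanned by the fractionally used arcs and run a standard maximal-path argument, where flow conservation controls the vertex degrees. Write $E^j_{\mathrm{frac}} := \{ b \in \E : f^j(b) \notin \mathbb{Z} \}$; by hypothesis $a \in E^j_{\mathrm{frac}}$, so this set is nonempty. The key local fact I would establish first is that every vertex $v \in V \setminus \{ s^j, t^j \}$ is incident to either no arc of $E^j_{\mathrm{frac}}$ or to at least two of them. This follows from Kirchhoff's law at $v$: in $\sum_{b \in \delta(v)} f^j(b) = \sum_{b \in \rho(v)} f^j(b)$ the integer-valued arcs contribute an integer to each side, hence $\sum_{b \in \delta(v) \cap E^j_{\mathrm{frac}}} f^j(b) - \sum_{b \in \rho(v) \cap E^j_{\mathrm{frac}}} f^j(b) \in \mathbb{Z}$, and a vertex incident to exactly one fractional arc would violate this.

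Next I would form the undirected multigraph $\Gamma$ on vertex set $V$ whose edges are the arcs of $E^j_{\mathrm{frac}}$ with orientations forgotten (keeping parallel arcs as parallel edges). By the local fact, every vertex of $\Gamma$ other than $s^j$ and $t^j$ has degree $0$ or $\ge 2$. Since $\Gamma$ has at least one edge, choose a simple path $P = (u_0, u_1, \dots, u_r)$ in $\Gamma$ with $r \ge 1$ that is maximal, meaning it cannot be extended at either endpoint. If $u_r \notin \{ s^j, t^j \}$, then $\deg_\Gamma(u_r) \ge 2$, so $u_r$ is incident to an edge other than $u_{r-1} u_r$; by maximality of $P$ the other endpoint of this edge is some $u_i$ with $i \le r-1$, and then $u_i, u_{i+1}, \dots, u_r, u_i$ is a cycle all of whose arcs lie in $E^j_{\mathrm{frac}}$, which proves the claim. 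So we may assume $u_r \in \{ s^j, t^j \}$, and by symmetry $u_0 \in \{ s^j, t^j \}$; since $P$ is a simple path of positive length its two endpoints are distinct, hence $\{ u_0, u_r \} = \{ s^j, t^j \}$ and $P$ is an $s^jt^j$ path (in the undirected sense) consisting entirely of fractional arcs, as required.

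I do not anticipate a genuine obstacle: the argument is routine once the degree bound is in place, and the only points requiring care are treating $s^j$ and $t^j$ as the two permitted degree-one exceptions — which is exactly the reason the conclusion allows a path instead of a cycle — and handling possible parallel arcs by working in the multigraph $\Gamma$. This claim is only a preparatory structural observation; the real work of this section will be the subsequent capacity-rounding argument for \smf.
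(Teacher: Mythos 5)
Your proof is correct and follows essentially the same route as the paper: both arguments use Kirchhoff's law to show that an internal vertex cannot be incident to exactly one fractionally-valued arc, and then grow a path/walk through fractional arcs that must either close into a cycle or terminate at $s^j$ and $t^j$. Your maximal-simple-path formulation with the explicit degree bound is a slightly more careful packaging of the paper's greedy extension argument, but there is no substantive difference.
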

\begin{proof}
Consider the undirected graph $G$ corresponding to our directed network $D$, which contains an edge $\{ u,v\}$ for all arcs $uv\in \E$.

Let us consider the following greedy approach: we start from an edge that has fractional $f^j(e_1)$ (by abuse of notation $f^j(e)$ refers to $f^j(a)$ for the corresponding arc $a$ in $D$) value in $G$, and in case $s^j$ is adjacent to such an edge, we start from that. Then we always continue on a new edge $e_{i+1}$ of $G$ with fractional $f^j(e_{i+1})$ value, until we find a cycle or reach $t^j$. Suppose we cannot continue from $v_{i+1}$ after a fractional edge $e_i = \{ v_i,v_{i+1}\}$, but we have not yet found a cycle or an $s^jt^j$ path. Then, this means that every edge incident to $v_{i+1}\notin \{ s^j,t^j\}$ other than $e_i$ has integer value in $f^j$. Hence, in the flow $f^j$ of the network $D$, exactly one of $\sum_{a\in \delta (v_{i+1})}f^j(a)$ and $\sum_{a\in \rho (v_{i+1})}f^j(a)$ is integral, which contradicts the Kirchoff law of flows.
\end{proof}

Next, we prove a key Lemma for the rounding procedure.

\begin{lemma}
\label{lemma:smf-1}
Let $f=(f^1,\dots, f^k)$ be a fractional stable multicommodity flow. Let $g=(g^1,\dots, g^k)$ be an integer multicommodity flow, such that 
if $f^j(a)=0$, then $g^j(a)=0$.

Furthermore, let $$c'^j(a)=\left\{
	\begin{array}{ll}
		g^j(a)  & \mbox{if } f^j(a)=c^j(a) \\
		max\{ g^j(a),\lceil c^j(a)\rceil \} & \mbox{otherwise}
	\end{array}
\right.$$, $$c'(a)=\left\{
	\begin{array}{ll}
		\sum_{j\in [k]}g^j(a)  & \mbox{if } \sum_{j\in [k]} f^j(a)=c(a) \\
		max\{ \sum_{j\in [k]}g^j(a),\lceil c(a)\rceil \} & \mbox{otherwise}
	\end{array}
\right.$$  Then, $g$ is a stable multicommodity flow with respect to the capacities $c'$ and commodity specific capacities $c'^j$.

\end{lemma}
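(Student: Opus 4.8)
The plan is to mirror the proofs of Lemma~\ref{lemma:hyp-1} and Lemma~\ref{lemma:cacq-1}: first check feasibility, then argue stability by contradiction using the fact that $f$ has no blocking walk. For feasibility, note that $g^j$ is already assumed to be an integer multicommodity flow, so it satisfies the Kirchhoff law and is nonnegative; what must be verified is that it respects the new capacities $c'^j$ and $c'$. But this is immediate from the definition: $g^j(a) \le c'^j(a)$ because $c'^j(a)$ is defined to be at least $g^j(a)$ in both cases, and similarly $\sum_{j} g^j(a) \le c'(a)$ by construction. So feasibility is routine.

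For stability, suppose for contradiction that some walk $W = (v_1, a_1, v_2, \dots, a_{l-1}, v_l)$ blocks $g$ for some commodity $j$. I want to show $W$ then also blocks $f$ for commodity $j$, contradicting stability of $f$. I need to check each of the four blocking conditions transfers from $g$ back to $f$. For condition (1), $g^j(a_i) < c'^j(a_i)$ for each $i$: I must show $f^j(a_i) < c^j(a_i)$. If $f^j(a_i) = c^j(a_i)$, then by definition $c'^j(a_i) = g^j(a_i)$, contradicting $g^j(a_i) < c'^j(a_i)$; hence $f^j(a_i) < c^j(a_i)$. For conditions (2) and (3) (the endpoint conditions at $v_1$ and $v_l$): if $v_1 = s^j$ or $v_l = t^j$ we are done directly; otherwise there is an arc $b$ with $g^j(b) > 0$ and $a_1 \succ_{v_1}^j b$ (resp. for $v_l$). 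Since $g^j(b) > 0$ forces $f^j(b) > 0$ by the hypothesis (contrapositive of ``$f^j(a)=0 \Rightarrow g^j(a)=0$''), the same arc $b$ witnesses the endpoint condition for $f$. For condition (4): if $\sum_{j'} f^j(a_i) = c(a_i)$, then $c'(a_i) = \sum_{j'} g^{j'}(a_i)$, so $\sum_{j'} g^{j'}(a_i) = c'(a_i)$, i.e.\ $a_i$ is saturated under $g$ too; the blocking of $g$ then gives a commodity $j' \ne j$ with $g^{j'}(a_i) > 0$ and $j \succ_{a_i} j'$, and again $g^{j'}(a_i) > 0$ implies $f^{j'}(a_i) > 0$, so the same $j'$ works for $f$. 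Thus $W$ blocks $f$, a contradiction.

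The only subtlety — and the one place one must be slightly careful — is condition (4) in the \emph{other} direction: blocking of $g$ requires the arc-preference witness only when $a_i$ is saturated \emph{under $g$}, whereas I am arguing from the hypothesis that $a_i$ is saturated \emph{under $f$}. The point is that the implication goes the right way: if $a_i$ is saturated under $f$ (i.e.\ $\sum_{j'} f^{j'}(a_i) = c(a_i)$), then $c'(a_i)$ is set to exactly $\sum_{j'} g^{j'}(a_i)$, which is therefore the aggregate $g$-flow on $a_i$, so $a_i$ is also saturated under $g$, and the $g$-blocking condition supplies the needed witness. If instead $a_i$ is not saturated under $f$, condition (4) for $f$ is vacuous and nothing needs to be checked. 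So there is no genuine obstacle here; the definitions of $c'^j$ and $c'$ are tailored precisely so that every ``tight/saturated'' situation for $f$ remains tight/saturated for $g$, while the support-shrinking hypothesis ($f^j(a)=0 \Rightarrow g^j(a)=0$) ensures every reroute-from-a-worse-arc witness for $g$ is also available for $f$. Assembling these four checks completes the contradiction and hence the proof.
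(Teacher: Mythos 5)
Your proposal is correct and follows essentially the same route as the paper's proof: feasibility is immediate from the definitions of $c'$ and $c'^j$, and stability is shown by transferring each of the four blocking conditions from $g$ back to $f$ via the tightness-preserving capacity definitions and the support condition $f^j(a)=0\Rightarrow g^j(a)=0$. Your explicit remark on condition (4) — that saturation under $f$ forces saturation under $g$, so the $g$-blocking witness is available exactly when needed — is in fact slightly more carefully stated than in the paper, but it is the same argument.
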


\begin{proof}

By the definition of the capacities, it is immediate that $g$ is feasible.
Assume for the contrary that $g$ is not stable and hence there exists a blocking walk $W=(v_1,a_1,v_2,\dots, e_{k-1},v_k)$  with respect to some commodity $j$.
Then, $g^j(a_i)\le c'^j(a_i)-1$ $\forall i$, which means that $f^j(a_i)<c^j(a_i)$ by the definition of $c'^j$.

If $v_1\ne s^j$, then there is an arc $a\in \delta (v_1)$ such that $g^j(a)\ge 1$ and $a_1\succ_{v_1}^ja$. By the condition on $g$, $f^j(a)>0$ too. 

If $v_k\ne t^j$, then there exists an arc $a'\in \rho (v_k)$, such that $g^j(a')\ge 1$ and $a_{k-1}\succ_{v_k}^ja'$, and again by the condition on $g$, $f^j(a')>0$ too.

Finally, if $\sum_{j\in [k]}g^j(a_i)=c'(a_i)$, then there exists some $j'\ne j$, such that $g^{j'}(a_i)\ge 1$ and $j\succ_{a_i}j'$. Hence, $f^{j'}(a_i)>0$ too.

Putting this together, we get that $W$ is a blocking walk for $f$ too, which contradicts the stability of $f$.
\end{proof}

We continue by describing the rounding algorithm for \smf. Given a directed graph $D=(V,\E)$, and a path or cycle $X$ in the undirected sense and a fixed orientation $\Vec{X}$ of $X$, we say that an arc $e\in X$ is a \emph{forward arc with respect to $\Vec{X}$}, if its direction is consistent with the orientation and a \emph{backward arc} otherwise. 

The algorithm starts by finding a fractional stable multicommodity flow $f$ and setting $g=f$.

Then, we go through each flow $g^j$ one-by-one and round them as follows. If we still have fractional values, then we find an $s^jt^j$ path or cycle $X$ of fractional edges. Then, we augment the flow with a sufficiently small $\varepsilon$ and iterate this until $g^j$ becomes integral. See Algorithm~\ref{alg:flow-round} for more details.

\begin{algorithm}[t]

	\SetAlgoNoLine
	\KwIn{An instance of \smf.}
	\KwOut{An integer stable multicommodity flow $f$ with respect to some capacities $c'$ satisfying $|c'(a)-c(a)|\le k -1$ for $a\in \E$.}
	  Find a stable fractional flow $f=(f^1,...,f^k)$ and set $g=f$.

   \For{ $j=1,...,k$} {
   \While{there is a fractional valued arc in $g^j$}{
    Find a cycle or an $s^jt^j$ path $X$ (in the undirected sense) as in Claim~\ref{claim:smf-cyc}.

    Fix an orientation $\Vec{X}$ of $X$. We choose this in a way such that if $X$ is an $s^jt^j$ path, then $\Vec{X}$ starts in $s^j$.

    Let $\varepsilon_1$ be the smallest number, such that increasing $g^j(a)$ on the forward arcs of $\Vec{X}$ and decreasing $g^j(a)$ on the backward arcs of $\Vec{X}$ with $\varepsilon_1$, at least one value becomes integral.

    Let $\varepsilon_2$ be the smallest number, such that increasing $g^j(a)$ on the backward arcs of $\Vec{X}$ and decreasing $g^j(a)$ on the forward arcs of $\Vec{X}$ with $\varepsilon_1$, at least one value becomes integral.
    
    % We choose $\varepsilon_1$ such that when we have a path $W$, it corresponds to the direction that decreases the aggregate flow.

    \If{$X$ is a cycle}{
    \If{$\varepsilon_1<\varepsilon_2$}{
    Let $g^j(a) :=g^j(a)+\varepsilon_1$ for forward arcs and $g^j(a):=g^j(a)-\varepsilon_1$ for backward arcs with respect to $\Vec{X}$.
    }
   \Else{
  Let  $g^j(a) :=g^j(a)-\varepsilon_2$ for forward arcs and $g^j(a):=g^j(a)+\varepsilon_2$ for backward arcs with respect to $\Vec{X}$.
   }
    }
    \ElseIf{$|g^j|>|f^j|$}{
Decrease the flow with $\varepsilon_2$, that is, let $g^j(a) :=g^j(a)-\varepsilon_2$ for forward arcs and $g^j(a):=g^j(a)+\varepsilon_2$ for backward arcs with respect to $\Vec{X}$.
}
\ElseIf{$|g^j|\le |f^j|$}{
Increase the flow with $\varepsilon_1$, that is, let $g^j(a) :=g^j(a)+\varepsilon_1$ for forward arcs and $g^j(a):=g^j(a)-\varepsilon_1$ for backward arcs with respect to $\Vec{X}$.}

}
    }
    Output the obtained flow $g$, and capacities $c'$ as defined in Lemma \ref{lemma:smf-1}
	\caption{Iterative Rounding for \smf}
	\label{alg:flow-round}
\end{algorithm}

\begin{theorem}
For any instance of \smf, Algorithm~\ref{alg:flow-round}  finds capacities $c'$, $c'^j,j\in [k]$ with $\max|c(a)-c'(a)|\le k-1$ and $c^j(a)=c'^j(a)$ for all $a\in \E$, such that $g$ is a stable integral multicommodity flow with respect to the capacities $c'$ and commodity specific capacities $c'^j$.

Furthermore, each flow $g^j$ has almost the same size as $f^j$, meaning that $||f^j|-|g^j||<1$ for all $j$ and for the aggregate flow, $||f|-|g||<k-1$.

Similarly, with a slight modification to Algorithm~\ref{alg:flow-round}, we can also achieve that $||f^j|-|g^j||<2$ for all $j\in [k]$, and $||f|-|g||<1$.
\end{theorem}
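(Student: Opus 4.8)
I would organize the argument into three stages: (i) show that Algorithm~\ref{alg:flow-round} terminates and that its output $(g,c',(c'^j)_{j\in[k]})$ satisfies the hypotheses of Lemma~\ref{lemma:smf-1}, so stability is immediate; (ii) verify the stated bounds on the capacities; (iii) control the flow sizes, where the two variants trade off the per-commodity bound against the aggregate bound. For stage (i), the key invariant is that while we are rounding commodity $j$, an arc whose $g^j$-value has become integral is never touched again, since every cycle or $s^jt^j$-path produced via Claim~\ref{claim:smf-cyc} consists only of arcs with fractional $g^j$-value. Hence each $g^j(a)$ stays in $[\lfloor f^j(a)\rfloor,\lceil f^j(a)\rceil]$ throughout; in particular $g^j(a)=0$ whenever $f^j(a)=0$, nonnegativity is preserved, and — since $\varepsilon_1,\varepsilon_2>0$ make at least one further arc integral — the inner while-loop halts after finitely many steps. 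Augmenting along a cycle, or along an $s^jt^j$-path in the undirected sense (oriented to start at $s^j$), preserves Kirchhoff's law, so the final $g^j$ is a nonnegative integral flow. Thus $g$ is an integral multicommodity flow with $g^j(a)=0$ whenever $f^j(a)=0$, and Lemma~\ref{lemma:smf-1} yields that $g$ is stable with respect to the $c',c'^j$ it defines.

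For stage (ii): since $g^j(a)\le\lceil f^j(a)\rceil\le c^j(a)$ (using $c^j(a)\in\mathbb{Z}_+$ and $f^j(a)\le c^j(a)$), both branches of the definition of $c'^j$ collapse to $c'^j(a)=c^j(a)$. For the aggregate capacity, observe first that $c'(a)$ is always an integer (it is $\sum_{j\in[k]}g^j(a)$, or a maximum of integers). In every case $|c'(a)-c(a)|\le\sum_{j\in[k]}|g^j(a)-f^j(a)|$, and the $j$-th summand is strictly below $1$ when $f^j(a)$ is fractional and is $0$ otherwise; hence $|c'(a)-c(a)|<k$, and being a nonnegative integer strictly below $k$ it is at most $k-1$.

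For stage (iii), unmodified algorithm: I would track, separately in each commodity's rounding phase, the quantity $|g^j|$ and maintain $\bigl||g^j|-|f^j|\bigr|<1$. It holds at the start ($g^j=f^j$); a cycle augmentation leaves $|g^j|$ unchanged; and a path augmentation moves $|g^j|$ toward $|f^j|$ by an amount in $(0,1)$ — we increase exactly when $|g^j|\le|f^j|$ and decrease otherwise — so the bound persists. Summing over $j$ gives immediately $\bigl||f|-|g|\bigr|<k$; obtaining the sharper bound $k-1$ requires a finer accounting of how the $k$ rounding phases accumulate (together with the integrality considerations used in stage (ii)), and this is the delicate point. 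For the stronger aggregate guarantee, modify the decision rule: when rounding $g^j$, compare the running total $\sum_{i\le j}|g^i|$ (the earlier $g^i$ already being fixed integral flows) against $\sum_{i\le j}|f^i|$, and increase or decrease $|g^j|$ on the chosen path accordingly. The same invariant argument then gives $\bigl|\sum_{i\le j}|g^i|-\sum_{i\le j}|f^i|\bigr|<1$ after every stage $j$, so in particular $\bigl||f|-|g|\bigr|<1$; and since $|g^j|=\sum_{i\le j}|g^i|-\sum_{i<j}|g^i|$ and likewise for $f$, telescoping two consecutive stage-bounds yields $\bigl||f^j|-|g^j|\bigr|<2$.

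The main obstacle is exactly stage (iii), and inside it the sharp aggregate bounds: stages (i)–(ii) are essentially bookkeeping once Lemma~\ref{lemma:smf-1} and Claim~\ref{claim:smf-cyc} are in hand. The modified rule is engineered so that per-stage slack never compounds, which is what buys the $O(1)$ aggregate guarantee; pushing the unmodified rule from the immediate $k$ down to $k-1$ is the subtler part, and I would expect to spend most of the effort there.
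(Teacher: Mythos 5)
Your proposal follows essentially the same route as the paper: stability via Lemma~\ref{lemma:smf-1} (using that integral $g^j$-values, in particular zeros, are never touched because the augmenting structures from Claim~\ref{claim:smf-cyc} consist only of fractional arcs), the capacity bounds from the fact that each fractional $g^j(a)$ ends at one of its two neighbouring integers, the per-commodity size bound from the ``move toward $|f^j|$'' rule, and the modified variant from tracking the global discrepancy. Your observation that comparing $\sum_{i\le j}|g^i|$ with $\sum_{i\le j}|f^i|$ is the same as comparing $|g|$ with $|f|$ (since unrounded commodities still carry $f^i$) is correct, and your telescoping derivation of $\bigl||f^j|-|g^j|\bigr|<2$ is in fact a little cleaner than the paper's version, which builds extra caveats into the decision rule to enforce that bound directly.

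The one place you stop short is the aggregate bound $\bigl||f|-|g|\bigr|<k-1$ for the unmodified algorithm: as you say, summing the per-commodity bounds only yields $<k$. You are right to flag this as the delicate point, and you should be aware that the paper's own proof does not close it either -- it simply states that the aggregate bound ``immediately follows'' from $\bigl||f^j|-|g^j|\bigr|<1$, which as written gives only $<k$. So your proposal reproduces everything the paper actually argues; if you want the stated $k-1$, you would need an additional observation (e.g.\ ruling out that all $k$ discrepancies can simultaneously approach $1$ with the same sign), which neither you nor the paper currently supplies.
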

\begin{proof}

First of all, as the algorithm sets the capacities according to Lemma~\ref{lemma:smf-1}, it is clear that $g$ is feasible with respect to the capacities. Since $g^j(a)$ remains 0, if $f^j(a)$ was 0, stability also follows by Lemma~\ref{lemma:smf-1}.

To see the bounds on the capacity changes, observe that for each $a\in \E$ and commodity $j\in [k]$, we never modify the value $g^j(a)$ if it was integral, and otherwise it can only change to one of the two closest integer values, by the choices of $\varepsilon_1$ and $\varepsilon_2$. This means also that $g^j(a)\le \lceil c^j(a)\rceil = c^j(a)$. Hence, if $c'^j(a)$ is would be updated to $g^j(a)\ne c^j(a)$ for an arc $a\in \E$ by the algorithm, then either $f^j(a)=c^j(a)$, so $c'^j(a)=c^j(a)$ or $g^j(a)>\lceil c^j(a)\rceil$, a contradiction in both cases.

Furthermore, because $c(a)$ is integer and $c'(a)\le g(a)= \sum_{j\in [k]} g^j(a)<f(a)+k$ is integer for any $a\in \E$, this means that the sum increases with $k-1$ at maximum. Also, if it decreases, then $c(a)=f(a)$, so $c'(a)=g(a)>f(a)-k$, so it decreases by at most $k-1$ too.

%If $e$ was not saturated by $\sum f^j$, then $c(e)$ only changes if $\sum f^j(e)$ becomes more, so by the same argument it can only increase by at most $k-1$. 

%As for each edge $e$, the value $f^j(e)$ only changes with less than one, which means they cannot exceed the next integer number, which is still less than or equal to $c^j(e)=\lceil c^j(e)\rceil$, and also if $e$ was saturated by $f^j$, in that case the flow does not change, so the $c^j(e)$ values also do not change. 

Observe that in the rounding procedure, the size $|g^j|=\sum_{a\in \delta (s^j)}g^j(a)$ of the flow $g^j$ only changes when $X$ is an $s^jt^j$ path. In every such step, it changes with less than one. Also, because we only decrease the size of the flow $g^j$, if its size is at least the size of $f^j$, and only increase if its size is less than the size of $f^j$, it cannot change with more than one, implying $||f^j|-|g^j||<1$ for any $j\in [k]$, as desired.
The bound on the change $||f|-|g||$ of the size of the aggregate flow $g$ immediately follows.

For the last statement, we modify the algorithm, such that when rounding the flow $g^j$ for an $s^jt^j$ path $X$, it keeps track of whether $|g|=\sum_{j\in [k]}\sum_{a\in \delta (s^j)}g^j(a)$ is greater or less than  $|f|=\sum_{j\in [k]}\sum_{a\in \delta (s^j)}f^j(a)$. Then, it increases the flow on it if $|g|<|f|$, unless $|g^j|> | f^j|+1$ and otherwise it decreases it, unless $|g^j|<|f^j|-1$.

Since in each rounding, the flow values change by strictly less than 1, this means that if we start to round a flow $g^j$ such that $|g|<|f|$ holds, then the size of $g^j$ cannot decrease by the end of iteration $j$, so at the end of that iteration, $|g|$ cannot get smaller. Similarly, if $|g|\ge |f|$ by the start of the iteration, then $|g^j|$ and thus $|g|$ cannot get larger. Hence, $||f|-|g||<1$ as desired. Furthermore, it is also easy to see that $|f^j-g^j|<2$ for all $j\in [k]$.

The algorithm starts with finding a fractional stable flow, which of course may not have polynomial running time. The rest of the algorithm is easily seen to be polynomial, as the number of iterations is bounded by $k\cdot |\E|$ and each iteration can be done in polynomial time.
\end{proof}\section{Conclusions}

In this paper, we have demonstrated that near-feasible stable solutions can always be found for a variety of NP-hard stable matching problems, including the Stable Hypergraph Matching problem, the College Admission problem with Common Quotas, and the Stable Multicommodity Flow problem. Our results establish that by allowing minimal modifications to capacity constraints, we can ensure stability in these settings without violating the fundamental properties of the underlying models.

We used an iterative rounding algorithm that systematically adjusts capacities while preserving stability. Our approach leverages Scarf’s algorithm to compute an initial fractional stable solution and iteratively rounds it to obtain an integral stable solution with bounded capacity adjustments. %Specifically, we showed that for Stable Hypergraph Matching, modifying each capacity by at most $\ell - 1$ ensures stability, while for College Admission with common quotas, adjustments of at most $2\ell - 1$ suffice. 
Moreover, in the case of Stable Fixtures, our method provides a polynomial-time guarantee for achieving a stable matching with minimal capacity modifications.

Our findings have significant implications for real-world applications in market design, resource allocation, and network flow optimization. In particular, they offer constructive solutions for policymakers and system designers, ensuring that slight adjustments to institutional constraints can lead to stable and implementable outcomes. Future research directions include refining our bounds for specific problem instances, exploring alternative rounding techniques, and extending our methods to additional constrained matching and flow problems.

Our work highlights the practical viability of near-feasible stable solutions, reinforcing the idea that stability can be preserved even in highly constrained and complex settings with only minimal systemic changes.

\bibliographystyle{ACM-Reference-Format}
\bibliography{main}

\end{document}